\documentclass[12pt]{article}

\usepackage{latexsym,amssymb,amsfonts,amsmath,amsthm,braket,appendix}
\usepackage{mathtools}
\usepackage{ytableau}
\usepackage{enumitem}
\usepackage{graphicx}
\usepackage{comment}
\usepackage{subcaption}

\usepackage[normalem]{ulem}
\usepackage{color}

\newtheorem{theorem}{Theorem}
\newtheorem{lemma}{Lemma}

\newtheorem{proposition}{Proposition}
\newtheorem{remark}{Remark}

\numberwithin{theorem}{section}
\numberwithin{lemma}{section}
\numberwithin{corollary}{section}
\numberwithin{proposition}{section}
\numberwithin{remark}{section}
\numberwithin{definition}{section}

\title{Pulsation of quantum walk on finite graphs}
\author{Taisuke Hosaka$^{1}$\thanks{Corresponding Author: hosaka-taisuke-xh@ynu.ac.jp },  Etsuo Segawa$^1$\\
$^1${\small Faculty of Environment and Information Sciences, Yokohama National University,}\\ 
{\small Hodogaya, Yokohama, 240-8501, Japan,
}\\}
\date{}

\begin{document}
\maketitle
\centerline{\textbf{Abstract}}
We study discrete-time quantum walks on weighted graphs, where every edge in an edge cut set is assigned a small weight $\epsilon>0$.
The parameter $\epsilon$ represents the strength of the connections through the cut edges: as $\epsilon \to 0$, these connections vanish, and the graph decomposes into the connected components obtained by removing the cut edges. 
We refer to these weighted cut edges as {\it weak edges}. 
We show that, for sufficiently small $\epsilon$ and on the time scale $t=\Theta(\epsilon^{-1/2})$, the finding probabilities between the connected components are asymptotically described by a continuous-time wave equation on a reduced graph, whose vertices represent the connected components and whose edges represent the weak edges connecting them. The wave equation is governed by a symmetric weighted Laplacian determined by the reduced graph and the number of arcs contained in the components. Consequently, finding probabilities of leading-term are independent of the detailed internal structures of the components.
We further show that this wave equation has the same form as Newton's equation of motion for a classical spring-mass system on the reduced graph.

\mbox{} \\
    \noindent
    {\bf Keywords}: Discrete-time quantum walk, Pulsation, Perturbation theory.

\section{Introduction}

Quantum walks have been studied as a useful framework for quantum information and quantum algorithms
\cite{ambainis2007,ambainis2001,grover1996}.
Their behavior is quite different from that of classical random walks, leading to various characteristic phenomena such as localization, ballistic spreading, and periodicity
\cite{grimmett2004,higuchi2017,inui2005,konno2002}.
Among many topics in quantum walks, quantum search and perfect state transfer have been studied, which are relative to quantum algorithms and quantum communication, originating from for example, \cite{shenvi2003, ambainis2005} and \cite{godsil2012a}, respectively. 
Such algorithms are driven by quantum walk models on graphs that start from a specific state (mainly the uniform state) or a specific vertex, and reach a specific set of vertices with a high probability at a short time. We note that once the time step passes this moment, and the same amount of time has elapsed, it will return to the original state. 
Then, it is natural to generalize the above target by the movement back and forth between different regions on a graph and to study its behavior, in particular, when its probability becomes concentrated in a particular region.
Those behaviors are strongly influenced by the structure of the underlying graph structure of course, and in particular, the characterization by its spectral property has been investigated, for example,  
\cite{godsil2012, godsil2019, kubota2022, GodsilHanmengbook, AdaChanHanmeng}.
On the other hand, in this paper, we pay attention to how the underlying graph consists of several subgraphs connected through a relatively small number of edges and we regard these subgraphs as the regions.
Quantum walks on graphs in this viewpoint have been studied in connection with quantum transport and the role of connections between different regions \cite{mukai2020,tsomokos2011}.
Weighted quantum walks also provide a natural framework in which the strength of transitions along individual edges can be varied \cite{wong2017,wong2016}.
These studies motivate us to investigate how weak connections between different regions affect the transport of a quantum walker.

We consider a graph consisting of several subgraphs connected through edges whose weights are much smaller than those of the internal edges.
From a physical viewpoint, such weak connections may be able to viewed as quantum tunneling between different regions: transitions across them are strongly suppressed, while transport between the regions remains possible over sufficiently long time scales.
This raises the question whether the transport between the subgraphs can be described without analyzing their detailed internal structures.
A natural approach is to contract each subgraph into a single vertex and study the resulting reduced graph.

In our previous work \cite{hosaka2026}, we considered a simple setting consisting of two graphs connected by a single edge with a small weight and showed that the quantum walker can move repeatedly between the two subgraphs, a behavior that we call the {\it pulsation phenomenon} \cite{hosaka2025}.
This result motivates us to extend the analysis to more general graphs consisting of several regions connected through weak edges.
Consequently, we consider a finite weighted graph in which every edge belonging to an edge cut set is assigned a small weight $\epsilon>0$, while all remaining edges have weight $1$.
We refer to these edges as {\it weak edges}.
As $\epsilon$ decreases, transitions across weak edges are increasingly suppressed, and at $\epsilon=0$ the graph separates into the connected components obtained by removing them.
We study the transport between these components for sufficiently small $\epsilon$.

We take as the initial state the uniform superposition of the arc states contained in one connected component.
Our main result shows that, on the time scale $t=\Theta(\epsilon^{-1/2})$, the finding probabilities between the connected components are asymptotically described by a continuous-time wave equation on a reduced graph (Theorem~\ref{thm:prob_matrix}).
The reduced graph is obtained by contracting each connected component into a single vertex while retaining the weak edges that connect the components.
The wave equation is governed by a symmetric weighted Laplacian $L_{\mathrm{sym}}$, determined by the numbers of weak edges between the components and the number of arcs contained in the individual components.
Thus, although the original quantum walk evolves in the full arc space of the graph, the asymptotic transport between the components is described by an operator in the much smaller vertex space of the reduced graph.
In particular, the leading asymptotic term of the component-wise finding probabilities depends only on the reduced graph and the numbers of arcs in the components, and not on their detailed internal structures.

The wave equation in the main theorem also admits a classical mechanical interpretation.
We show that it has the same form as Newton's equation of motion for a spring-mass system on the reduced graph.
In this correspondence, the number of arcs contained in each connected component plays the role of its mass, whereas the number of weak edges connecting two components plays the role of the spring constant between them.
Hence, the spring-mass system provides a classical mechanical interpretation of the asymptotic transport described by our quantum-walk model.

This paper is organized as follows.
Section~\ref{subsec_setting_many} introduces the model and notation.
In Section~\ref{subsec_theorem_many}, we state the main result.
Section~\ref{subsec_simu} presents numerical simulations and the spring-mass interpretation of the main result.
In Section~\ref{subsec_proof_many}, we prove the main result.
Section~\ref{subsec_summary} summarizes and discusses the results.

\section{Setting the model}
\label{subsec_setting_many}
Let $G=(X,A)$ be a finite, simple and connected graph, where $X$ is the set of vertices and $A$ is the set of symmetric arcs. 
For each $a \in A$, its origin and terminus are denoted by $o(a)$ and $t(a)$, respectively. 
Each arc $a$ has an inverse arc $a^{-1} \in A$, satisfying $o(a)=t(a^{-1})$ and $t(a)=o(a^{-1})$.
Let $B \subset A$ be a symmetric arc set corresponding to an edge cut set such that removing the arcs in $B$ separates $G$ into connected components.
Remark that $a \in B \Longleftrightarrow a^{-1} \in B$ holds. We refer to such a set $B$ as a \textit{weak edge set}. 
This terminology is motivated by the weight function $w_\epsilon$ as follows:
For $\epsilon \geq 0$, the weight function $w_{\epsilon}: A \to [0, \infty)$ is defined by
\begin{align}
    \label{eq: weight}
    w_{\epsilon}(a) =
    \begin{dcases}
        \epsilon &: a \in B, \\
        1 &: a \notin B.
    \end{dcases}
\end{align}
Since $B$ is symmetric, $w_{\epsilon}(a) = w_{\epsilon}(a^{-1})$ holds for all $a \in A$. 
We consider \(G\) as a weighted graph with the weight function \(w_\epsilon\) defined in Eq.~\eqref{eq: weight}. The parameter \(\epsilon>0\) represents the strength of the connections through the weak edges. As $\epsilon\to0$, the weights of the arcs in $B$ vanish. In particular, at $\epsilon=0$, the weighted graph separates into the connected components obtained by removing $B$.

Assume that the graph with the vertex set $X$ and the arc set $A\setminus B$ consists of $n$ connected components, denoted by $G_{1}=(X_{1}, A_{1}), \dots, G_{n}=(X_{n}, A_{n})$, where 
$A_{j}=\{a\in A\setminus B : o(a),t(a)\in X_{j}\}$.
We assume that each connected component contains at least one edge; equivalently, $|A_j|>0$ for every $j=1,\ldots,n$.
Then, we see $X=\bigsqcup_{j=1}^{n} X_{j}$ where $\bigsqcup$ denotes the disjoint union.
Let $\pi: X \to \tilde{X}=\{X_{1}, \dots, X_{n}\}$ be the natural surjection associated with the partition of the set of vertex $X$, that is, $\pi(x)=X_{j}$ if $x \in X_{j}$. 
We define the reduced graph by
\begin{align*}
    \tilde{G}=(\tilde{X},B),
\end{align*}
where each arc $a\in B$ is regarded as an arc between the corresponding connected components.
For convenience, we identify each vertex $V\in\tilde X$ with the vertex set of the corresponding connected component of $G$.
Thus, $V$ is used both as a vertex of the reduced graph and as the corresponding subset of $X$.
In general, $\tilde{G}$ is a symmetric multigraph that may contain
multiple arcs and self-loops.
For $V,W \in \tilde{X}$, $\ell(V,W)$ is the number of directed weak edges from $V$ to $W$ in the original graph $G$, denoted by
\begin{align*}
    \ell(V,W)=\left|\left\{ a\in B:\pi(o(a))=V,\pi(t(a))=W \right\}\right|.
\end{align*}
Since $B$ is symmetric, $\ell(V,W)=\ell(W,V)$ holds.
For each $x\in X$, we define
\begin{align*}
    \deg(x) = |\{a \in A\setminus B \mid o(a)=x\}|, \quad \deg_{B}(x) = |\{a \in B \mid o(a)=x\}|.
\end{align*}
Thus, $\deg(x)$ counts the arcs in $A\setminus B$ originating from $x$, whereas $\deg_B(x)$ counts the weak arcs originating from $x$.
For each $V\in\tilde{X}$, its degree in the reduced graph $\tilde{G}$ is given by
\begin{align*}
\deg_{\tilde{G}}(V)=\sum_{x\in V}\deg_B(x)=\sum_{W\in\tilde{X}}\ell(V,W).
\end{align*}
The weighted degree of $x\in X$ is defined by
\begin{align*}
    m_{\epsilon}(x) = \sum_{a: o(a)=x} w_{\epsilon}(a)=\deg (x)+\epsilon \deg_{B}(x).
\end{align*}
We then define the transition probability, which gives the probability that a random walker on $o(t)$ moves along the arc $a$ on the original graph $G$ by
\begin{align*}
    p_{\epsilon}(a) = \frac{w_{\epsilon}(a)}{m_{\epsilon}(o(a))}.
\end{align*}
Then $p_\epsilon(a)$ represents the transition probability for the walker to move from $o(a)$ to $t(a)$ along the arc $a$. 
In particular, the transition probabilities along the weak arcs decrease as $\epsilon$ becomes smaller, so that $\epsilon$ controls the coupling between the components $G_1,\ldots,G_n$.

We introduce the boundary operator $d_{\epsilon}: \mathbb{C}^{A} \to \mathbb{C}^{X}$ defined by
\begin{align*}
    (d_{\epsilon}\psi)(x) = \sum_{a \in A \,:\, t(a)=x} \sqrt{p_{\epsilon}(a^{-1})}\psi(a)
\end{align*}
for any $\psi \in \mathbb{C}^{A}$ and $x \in X$. 
Its adjoint $d_{\epsilon}^{*}: \mathbb{C}^{X} \to \mathbb{C}^{A}$ is given by
\begin{align*}
    (d_{\epsilon}^{*}f)(a) = \sqrt{p_{\epsilon}(a^{-1})}f(t(a))
\end{align*}
for any $f \in \mathbb{C}^{X}$ and $a \in A$. 
By direct calculation, we can verify that $d_{\epsilon}d^{*}_{\epsilon}$ is the identity operator in $\mathbb{C}^{X}$, and $d_{\epsilon}^{*}d_{\epsilon}$ is an orthogonal projection in $\mathbb{C}^{A}$. 
Furthermore, the shift operator $S: \mathbb{C}^{A} \to \mathbb{C}^{A}$ is defined as the flip-flop shift, given by
\begin{align*}
    (S\psi)(a) = \psi(a^{-1})
\end{align*}
for any $\psi \in \mathbb{C}^{A}$ and $a \in A$.

The time evolution operator $U(\epsilon): \mathbb{C}^{A} \to \mathbb{C}^{A}$ is defined by
\begin{align*}
    U(\epsilon) = S(2d_{\epsilon}^{*}d_{\epsilon} - I),
\end{align*}
where $I$ denotes the identity operator on $\mathbb{C}^A$. 
At $\epsilon=0$, transitions through weak edges are completely suppressed, and the quantum walk restricted to the internal arcs decomposes into independent Grover walks on the connected components $G_1,\ldots,G_n$.
On the other hand, if $\epsilon=1$, the time evolution corresponds to the standard Grover walk on the entire graph $G$.
Here, the time evolution operator of the Grover walk is given by
\begin{align*}
    (U(1)\psi)(a)=\frac{2}{m_{1}(t(a))} \sum_{b\,:\, o(b)=t(a)} \psi(b)-\psi(a^{-1})
\end{align*}
for $a \in A$ and $\psi\in \mathbb{C}^A$.
This expression indicates that a wave incident along an arc $a$ is transmitted to each of the other outgoing arcs with weight $2/\deg(t(a))$, and is reflected back along the reverse arc $a^{-1}$ with weight $2/\deg(t(a))-1$.
Let $\psi_{0}^{(j)} \in \mathbb{C}^{A}$ be the initial state, defined as the uniform state on a connected component $G_{j}$, given by
\begin{align}
    \label{eq: initial}
    \psi_{0}^{(j)}(a)=
    \begin{dcases}
        \frac{1}{\sqrt{|A_{j}|}} &: a \in A_{j}, \\
        0 &: \text{otherwise}.
    \end{dcases}
\end{align}
Let $\psi_{t}^{(j)} \in \mathbb{C}^{A}$ denote the state at time $t=0,1,2,...$, determined by the iteration
\begin{align}
    \label{eq: iteration}
    \psi_{t+1}^{(j)} = U(\epsilon)\psi_{t}^{(j)}=U(\epsilon)^{t+1}\psi_{0}^{(j)}.
\end{align}
Let $\mu_{t}^{(\epsilon)}$ be the $n \times n$ matrix whose $(j,k)$-entry, denoted by $\mu_{t}^{(\epsilon)}(j,k)$, represents the probability of finding the walker in the connected component $G_{k}$ at time $t$, given that the initial state is the uniform state on $G_{j}$. 
Specifically, the matrix elements are defined by
\begin{align}
    \label{eq: prob}
    \mu_{t}^{(\epsilon)}(j,k) = \sum_{a \in A_{k}} \left| \psi_{t}^{(j)}(a) \right|^{2}.
\end{align}
Note that the sum of each row in the matrix $\mu_{t}^{(\epsilon)}$ does not necessarily equal 1, since it does not account for the probability of finding the walker on the weak edges.

\section{Main Results}
\label{subsec_theorem_many}
In this section, we present our main theorems. Before stating the theorem, we introduce some necessary preliminaries. 
For each $V\in \tilde{X}$, the corresponding connected component is also denoted by its vertex set $V$, for a simplicity.
Moreover, for each $V\in\tilde{X}$, let $A_{V}$ denote the set of arcs contained in the component $V$; that is, $A_V = A_j$ if $V = X_j$.
Let $D_{B}: \mathbb{C}^{\tilde{X}} \to \mathbb{C}^{\tilde{X}}$ and $M_{B}: \mathbb{C}^{\tilde{X}} \to \mathbb{C}^{\tilde{X}}$ be the degree operator and adjacency operator of the reduced graph $\tilde{G}$, denoted by
\begin{align}
\label{eq: def_D_B}
    (D_{B}\tilde{f})(V)=\deg_{B}(V)\tilde{f}(V), \quad (M_{B}\tilde{f})(V)=\sum_{W: V\sim W}\ell(V,W)\tilde{f}(W)
\end{align}
for $V \in \tilde{X}$ and $\tilde{f} \in \mathbb{C}^{\tilde{X}}$. 
Let $L: \mathbb{C}^{\tilde{X}} \to \mathbb{C}^{\tilde{X}}$ be the Laplacian operator of the reduced graph $\tilde{G}$, defined by
\begin{align}
    \label{eq: laplacian}
    L=D_{B}-M_{B}.
\end{align}
We also define a diagonal operator $R: \mathbb{C}^{\tilde{X}} \to \mathbb{C}^{\tilde{X}}$ given by
\begin{align}
    \label{eq: def_R}
    (R\tilde{f})(V) &= \frac{1}{|A_{V}|}\tilde{f}(V)
\end{align}
for $V \in \tilde{X}$ and $\tilde{f} \in \mathbb{C}^{\tilde{X}}$. 
Let $L_{\mathrm{sym}}$ denote the symmetric weighted Laplacian operator associated with $R$, defined by
\begin{align}
    \label{eq: sym_laplacian}
    L_{\mathrm{sym}} = R^{1/2} L R^{1/2}.
\end{align}
We immediately see that 
\begin{align}
\label{eq: Lsym_x}
(L_{\mathrm{sym}})_{V,W}
=
\frac{\deg_{\tilde{G}}(V)}{|A_V|}\delta_{V,W}
-
\frac{\ell(V,W)}{\sqrt{|A_V||A_W|}}
\end{align}
for $V,W\in\tilde{X}$, where $\delta_{V,W}$ denotes the Kronecker delta.

We are now ready to state our main result, which shows that the leading-term finding probabilities of the original quantum walk can be described by a continuous-time wave equation on the reduced graph.
\begin{theorem}\label{thm:prob_matrix}
    Let $L_{\mathrm{sym}}$ be the weighted symmetric Laplacian defined by Eq.\,(\ref{eq: sym_laplacian}), and let $\mu_{t}^{(\epsilon)}$ be the finding probability matrix defined by Eq.\,(\ref{eq: prob}). 
Then, for any fixed $\tau>0$ and sufficiently small $\epsilon>0$, its $(j,k)$-entry satisfies
    \begin{align*}
        \mu_{\lfloor \tau/\sqrt{2\epsilon} \rfloor}^{(\epsilon)}(j,k) = \left| \Phi_{\tau}^{[C]}(j,k) \right|^{2} + O(\sqrt{\epsilon}).
    \end{align*}
    where \(\Phi_{\tau}^{[C]} \in \mathbb{C}^{\tilde{X}\times\tilde{X}}\) denotes the continuous-time evolution satisfying the following wave equation:
    \begin{align}
    \label{eq: wave_reduced}
        \frac{\partial^{2}}{\partial\tau^{2}}\Phi_{\tau}^{[C]}=-L_{\mathrm{sym}}\Phi_{\tau}^{[C]}, \quad 
        \Phi_{0}^{[C]}=I,\quad  \frac{\partial}{\partial \tau}\Phi_{0}^{[C]}=0.
    \end{align}
\end{theorem}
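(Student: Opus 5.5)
We will use degenerate perturbation theory for the eigenvalue $1$ of $U(\epsilon)$. At $\epsilon=0$, $U(0)$ splits into independent Grover walks on $G_1,\dots,G_n$, with the weak arcs only flipping among themselves. Each component is connected, so the eigenvalue $1$ of $U(0)$ restricted to $\mathbb{C}^{A_j}$ is simple, and its eigenvector is the uniform state $\psi_0^{(j)}$. The eigenvalue $1$ of $U(0)$ on $\mathbb{C}^A$ therefore has an $n$-dimensional eigenspace $\mathcal{H}_0=\mathrm{span}\{\psi_0^{(1)},\dots,\psi_0^{(n)}\}\cong\mathbb{C}^{\tilde X}$. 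There is one possible exception: weak arcs may carry a $+1$ eigenvector of $S$ restricted to $\mathbb{C}^B$, but these weak-arc states start with zero amplitude and we will show they only enter at higher order. The claim is that, for small $\epsilon$, this cluster splits into eigenvalues $e^{\pm i\theta_r(\epsilon)}$ with $\theta_r(\epsilon)=\sqrt{2\epsilon\lambda_r}+O(\epsilon)$, where $\lambda_r$ are the eigenvalues of $L_{\mathrm{sym}}$. Then at $t=\lfloor\tau/\sqrt{2\epsilon}\rfloor$ the phase is $t\theta_r=\tau\sqrt{\lambda_r}+O(\sqrt\epsilon)$. Combining $e^{\pm i t\theta_r}$ with equal weights (forced by the real initial data and the symmetry $U^{-1}=SUS$) gives $\cos(\tau\sqrt{L_{\mathrm{sym}}})=\Phi^{[C]}_\tau$.

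\textbf{Step 1 (effective operator).} Expand $d_\epsilon^*d_\epsilon$ in $\epsilon$: $\sqrt{p_\epsilon(a^{-1})}$ is $O(\sqrt\epsilon)$ on weak arcs and $1/\sqrt{\deg}+O(\epsilon)$ on internal arcs. Write $U(\epsilon)=U_0+\sqrt\epsilon\,U_1+\epsilon U_2+O(\epsilon^{3/2})$, where $U_1$ couples the component vertices to the weak arcs. Let $P_0$ be the Riesz projection of $U_0$ onto the eigenvalue $1$. We compute the reduced operator $P_0(U-1)P_0$ together with its second-order (Schur complement) correction $P_0U_1(1-U_0)^{-1}Q_0U_1P_0$. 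The first-order term $P_0U_1P_0$ vanishes because $U_1$ maps component states into weak arcs. To handle the non-normal structure cleanly, we will instead use the standard spectral map: $U$ restricted to the $1$-cluster is governed by $T_\epsilon=d_\epsilon S d_\epsilon^*$. Its eigenvalues $\cos\theta$ correspond to $e^{\pm i\theta}$ of $U$. This converts the problem into a self-adjoint perturbation problem on $\mathbb{C}^X$, where $T_\epsilon$ is $1-\epsilon\,\mathcal{L}+O(\epsilon^2)$ on the top eigenspace. The top eigenspace of $T_0$ is spanned by the normalized $\sqrt{\deg}$ vectors on each $X_j$, with norm$^2$ $=|A_j|$.

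\textbf{Step 2 (identify $L_{\mathrm{sym}}$).} Compressing $1-T_\epsilon$ to this eigenspace to first order in $\epsilon$ is a direct calculation. The diagonal entries lose $\epsilon\deg_{\tilde G}(V)/|A_V|$, from the $O(\epsilon)$ decrease of the internal transition weights plus the weak-arc contribution. The off-diagonal entries gain $\epsilon\,\ell(V,W)/\sqrt{|A_V||A_W|}$. So the compression is $1-\epsilon L_{\mathrm{sym}}+O(\epsilon^2)$, and $1-\cos\theta=\epsilon\lambda$ gives $\theta=\sqrt{2\epsilon\lambda}+O(\epsilon^{3/2})$. The eigenvectors of $U$ are $(1-e^{i\theta}S)d_\epsilon^*f$, and their projections onto $\mathcal{H}_0$ are, to $O(\sqrt\epsilon)$, the eigenvectors $f_r$ of $L_{\mathrm{sym}}$ transported by the isometry $\tilde f\mapsto\sum_V\tilde f(V)\psi_0^{(V)}$.

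\textbf{Step 3 (evolution).} Decompose $\psi_0^{(j)}$ along the $2n$ eigenvectors (plus remainder). Its components on the rest of the spectrum are $O(\sqrt\epsilon)$, because the perturbed eigenprojection differs from $P_0$ by $O(\sqrt\epsilon)$. The spectral gap of $U_0$ away from $1$ ensures these remainder components never accumulate, since $U$ is unitary and no secular growth occurs. Then $\langle\psi_0^{(k)},U^t\psi_0^{(j)}\rangle=\sum_r \cos(t\theta_r)f_r(j)f_r(k)+O(\sqrt\epsilon)=\Phi^{[C]}_\tau(j,k)+O(\sqrt\epsilon)$. Finally, $\mu_t(j,k)$ equals $|\langle\psi_0^{(k)},\psi_t\rangle|^2$ plus the mass on $A_k$ orthogonal to $\psi_0^{(k)}$, which is $O(\epsilon)$ by the same decomposition.

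\textbf{Main obstacle.} The hardest part is making the error uniform over times $t\sim\epsilon^{-1/2}$. The eigenvalue error $O(\epsilon^{3/2})$ multiplied by $t$ gives $O(\epsilon)$, which is acceptable. However, near-degenerate eigenvalues of $L_{\mathrm{sym}}$, including the zero eigenvalue with $\theta=0$ exactly and the possible $+1$ weak-arc states, require working with spectral projections of clusters rather than individual eigenvectors. We will handle this with the Kato analytic perturbation expansion in $\sqrt\epsilon$ for the self-adjoint $T_\epsilon$ on the whole cluster. We expect this to work because $\cos(\tau\sqrt{\cdot})$ is an entire function of $L_{\mathrm{sym}}$, so degeneracies cause no trouble.
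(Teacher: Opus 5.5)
Your overall route is the paper's. Your $T_\epsilon=d_\epsilon S d_\epsilon^*$ is exactly $W_{\mathrm{sym}}(\epsilon)$ of Eq.~\eqref{eq:similar}, and compressing it onto the normalized $\sqrt{\deg}$-vectors is the symmetrized form of Proposition~\ref{prop: laplacian} and Lemmas~\ref{lem: e-val_sym}--\ref{lem: eigenvalue_expansion}. The lift to $U(\epsilon)$ is the spectral mapping theorem, and that part is fine.

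\textbf{The gap in Step~3.} It comes from a false claim in your setup: the eigenvalue $1$ of the Grover walk on a connected $G_j$ is simple only when $G_j$ is a tree.
\begin{itemize}
\item Every antisymmetric, divergence-free flow on $A_j$ (that is, $\psi(a^{-1})=-\psi(a)$ and $d_0\psi=0$) satisfies $U(0)\psi=-S\psi=\psi$. So this eigenspace on $\mathbb{C}^{A_j}$ has dimension $|E_j|-|X_j|+2$, which is $4$ for the $K_4$ components in the simulations.
\item Adding the antisymmetric weak-arc states, the $1$-eigenspace of $U(0)$ has dimension $|E|-|X|+2n$, not $n$.
\item Consequently, for $\epsilon>0$ the cluster of $U(\epsilon)$ near $1$ contains the $2n-1$ inherited eigenvalues (not $2n$, since $\lambda=0$ gives only $d_\epsilon^* f_0$). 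It also contains the exact birth eigenspace $\{\psi: S\psi=-\psi,\ d_\epsilon\psi=0\}$, of dimension $|E|-|X|+1$, which the spectral map from $T_\epsilon$ does not see.
\end{itemize}
Your justification, that the perturbed eigenprojection differs from $P_0$ by $O(\sqrt{\epsilon})$, only shows that $\psi_0^{(j)}$ is $O(\sqrt{\epsilon})$-close to the whole cluster. It does not show closeness to the span of your inherited eigenvectors. A component on the birth space would not oscillate, and would add a $t$-independent term to $\langle\psi_0^{(k)},U^t\psi_0^{(j)}\rangle$.

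The missing step is short: birth vectors for eigenvalue $1$ are antisymmetric while $S\psi_0^{(j)}=\psi_0^{(j)}$, so the overlap is exactly $0$. Alternatively, follow the paper's Lemma~\ref{lem: total_overlap}: compute the overlaps with the inherited eigenvectors explicitly and check that their squares sum to $1+O(\sqrt{\epsilon})$. That makes the rest of the spectrum irrelevant.

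The same misidentification is why your discarded Step~1 claim $P_0U_1P_0=0$ is false. $U_1$ couples $\mathcal{H}_0$ at order $\sqrt{\epsilon}$ to the antisymmetric weak-arc states inside $\mathrm{Ran}P_0$. That coupling is precisely what produces phases of order $\sqrt{\epsilon}$ rather than $\epsilon$.

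\textbf{The weak-arc weight.} The weak-arc states do \emph{not} enter only at higher order. By Lemma~\ref{lem: qw_eigenvector}, for $\lambda\neq0$ each normalized $\eta^{(\epsilon)}_{\pm,\lambda,s}$ carries weight $\frac{1}{4\lambda}\cdot 2\langle R^{1/2}\tilde f_{\lambda,s},LR^{1/2}\tilde f_{\lambda,s}\rangle=\frac12$ on the weak arcs at leading order. Its projection onto $\mathcal{H}_0$ is therefore $\mp\frac{i}{\sqrt2}$ times the lift of $\tilde f_{\lambda,s}$, not the lift itself as your Step~2 states.
\begin{itemize}
\item This factor gives each branch weight $\frac12$ and produces $\cos(t\theta)$ rather than $2\cos(t\theta)$. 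Read literally, your Step~2 would violate Bessel's inequality for $\psi_0^{(j)}$.
\item It is also why $\sum_k\mu_t^{(\epsilon)}(j,k)$ tends to $(\cos^2(\tau\sqrt{L_{\mathrm{sym}}}))_{jj}$ rather than to $1$.
\end{itemize}

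With these two repairs, your amplitude-based Step~3 works. It uses evenness in $t$ from $U^{-1}=SUS$ and $S\psi_0^{(j)}=\psi_0^{(j)}$, together with an $O(\epsilon)$ bound on the non-uniform mass on $A_k$. This is a legitimate and somewhat tidier alternative to the paper's direct summation of $|\psi_t^{(j)}(a)|^2$ over $A_k$.
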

Theorem~\ref{thm:prob_matrix} shows that the leading term of the finding probability is completely determined by symmetric weighted Laplacian \(L_{\mathrm{sym}}\) on the reduced graph. Since \(L_{\mathrm{sym}}\) depends only on the weak connections between the components and their number of arcs, the leading-term of finding probabilities is independent of the detailed internal structures of the components. Thus, although the original quantum walk evolves on the full arc space \(\mathbb{C}^{A}\), its component-wise finding probabilities are asymptotically determined by an operator acting only on the vertex space \(\mathbb{C}^{\tilde X}\) of the reduced graph, where each vertex represents a connected component of the original graph.
Moreover, the matrix \(\Phi_\tau^{[C]}\) is the solution of the continuous-time wave equation generated by \(L_{\mathrm{sym}}\). In fact, it is explicitly given by
\begin{align}
\label{eq: solution}
    \Phi_\tau^{[C]}=
\cos\!\left(\tau\sqrt{L_{\mathrm{sym}}}\right).
\end{align}
Therefore, on the time scale \(t=\Theta(\epsilon^{-1/2})\), the leading-term finding probabilities of the discrete-time quantum walk are described by continuous-time wave propagation on the reduced graph. The wave equation in Theorem~\ref{thm:prob_matrix} also admits a classical mechanical interpretation. In particular, it has the same form as the equation of motion of a spring-mass system, where the number of arcs in each connected component corresponds to the mass and the weak-edge multiplicities correspond to the spring constants. We discuss this correspondence in more detail in the next section.

\begin{remark}
\label{rem: conti}
The estimate obtained in the proof, see Eq.~\eqref{eq: order_t}, also describes the behavior on shorter time scales. If $t=t(\epsilon)$ satisfies $t=o(\epsilon^{-1/2})$, then
\begin{align*}
    \lim_{\epsilon\to0}
\mu_{t(\epsilon)}^{(\epsilon)}(j,k)
=
\delta_{j,k}.
\end{align*}
Thus, on shorter time scales, the walker remains in its initial component in leading order. Therefore, $t=\Theta(\epsilon^{-1/2})$ is the first time scale on which nontrivial transport through the weak edges appears.
\end{remark}

\section{Interpretations of the main result}
\label{subsec_simu}
\subsection{Numerical simulations}
We present numerical simulations for the cases $\tilde{G}=C_3$ and
$\tilde{G}=P_3$, shown in Figs.~\ref{fig: cycle} and \ref{fig: path},
respectively.
Here, $C_3$ denotes the cycle graph on three vertices, while $P_3$
denotes the path graph on three vertices.
In both cases, each connected component $G_j$ of the original graph is
a complete graph $K_4$, and the initial state is uniformly distributed
in $G_1$.
Since $|A_j|=12$ for $j=1,2,3$, the corresponding symmetric weighted
Laplacians are given by
\begin{align*}
L_{\mathrm{sym}}^{(C_3)}
&=
\frac{1}{12}
\begin{pmatrix}
2 & -1 & -1 \\
-1 & 2 & -1 \\
-1 & -1 & 2
\end{pmatrix},
\qquad
L_{\mathrm{sym}}^{(P_3)}
=
\frac{1}{12}
\begin{pmatrix}
1 & -1 & 0 \\
-1 & 2 & -1 \\
0 & -1 & 1
\end{pmatrix}.
\end{align*}
Using the explicit formula for $\Phi_\tau^{[C]}$ in
Eq.~\eqref{eq: solution}, we obtain explicit expressions for the
finding probabilities in both cases.

For $\tilde{G}=C_3$, we obtain
\begin{align*}
\mu_{\lfloor \tau/\sqrt{2\epsilon} \rfloor}^{(\epsilon)}(1,1)
&=
\left(
\frac{1}{3}
+
\frac{2}{3}\cos\left(\frac{\tau}{2}\right)
\right)^2
+
O(\sqrt{\epsilon}),\\
\mu_{\lfloor \tau/\sqrt{2\epsilon} \rfloor}^{(\epsilon)}(1,2)
&=
\mu_{\lfloor \tau/\sqrt{2\epsilon} \rfloor}^{(\epsilon)}(1,3)\\
&=
\left(
\frac{1}{3}
-
\frac{1}{3}\cos\left(\frac{\tau}{2}\right)
\right)^2
+
O(\sqrt{\epsilon}).
\end{align*}
As seen in Fig.~\ref{fig: cycle}, the finding probabilities in $G_2$
and $G_3$ almost coincide.
This is also consistent with the explicit formulas above, which give
the same leading-term expression for
$\mu_t^{(\epsilon)}(1,2)$ and $\mu_t^{(\epsilon)}(1,3)$.
The probability initially concentrated in $G_1$ is symmetrically transferred to $G_2$ and $G_3$, and subsequently returns to
$G_1$.

On the other hand, for $\tilde{G}=P_3$, we obtain
\begin{align*}
\mu_{\lfloor \tau/\sqrt{2\epsilon} \rfloor}^{(\epsilon)}(1,1)
&=
\left(
\frac{1}{3}
+
\frac{1}{2}\cos\left(\frac{\tau}{\sqrt{12}}\right)
+
\frac{1}{6}\cos\left(\frac{\tau}{2}\right)
\right)^2
+
O(\sqrt{\epsilon}),\\
\mu_{\lfloor \tau/\sqrt{2\epsilon} \rfloor}^{(\epsilon)}(1,2)
&=
\left(
\frac{1}{3}
-
\frac{1}{3}\cos\left(\frac{\tau}{2}\right)
\right)^2
+
O(\sqrt{\epsilon}),\\
\mu_{\lfloor \tau/\sqrt{2\epsilon} \rfloor}^{(\epsilon)}(1,3)
&=
\left(
\frac{1}{3}
-
\frac{1}{2}\cos\left(\frac{\tau}{\sqrt{12}}\right)
+
\frac{1}{6}\cos\left(\frac{\tau}{2}\right)
\right)^2
+
O(\sqrt{\epsilon}).
\end{align*}
In this case, the explicit formulas involve two different frequencies,
$\tau/\sqrt{12}$ and $\tau/2$.
This leads to the more complicated oscillatory behavior observed in
Fig.~\ref{fig: path}: the probability is transferred from $G_1$
through the intermediate component $G_2$ to the opposite end component
$G_3$, and subsequently returns to $G_1$.

In both examples, the numerical results agree well with the asymptotic
formulas and the corresponding curves almost overlap.
A small discrepancy becomes more visible as time increases, which is
consistent with the time-dependent error estimate obtained in Eq.~\eqref{eq: order_t}.
Although the internal structures of the connected components are identical in the two examples, their transport behaviors are clearly different. This difference originates from the structure of the reduced graph. Therefore, these numerical results illustrate the reduced-graph description given in Theorem~\ref{thm:prob_matrix}.

\begin{figure}[htbp]
  \centering
  \begin{minipage}{0.55\columnwidth}
    \centering
    \includegraphics[width=\columnwidth]{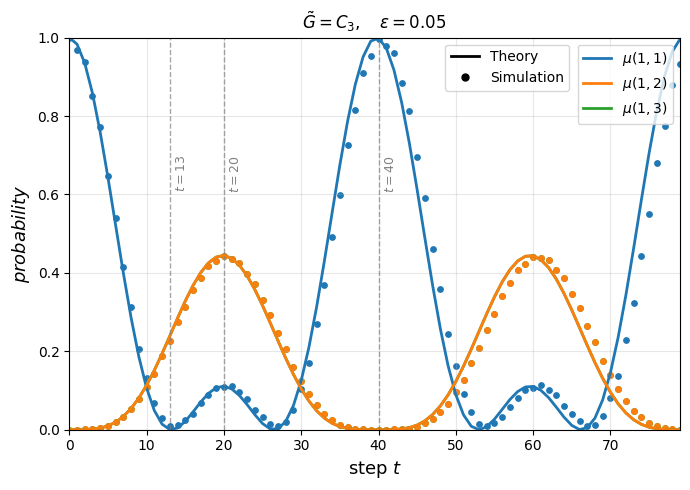}
  \end{minipage}
  \hspace{5mm}
  \begin{minipage}{0.35\columnwidth}
    \centering
    \includegraphics[width=\columnwidth]{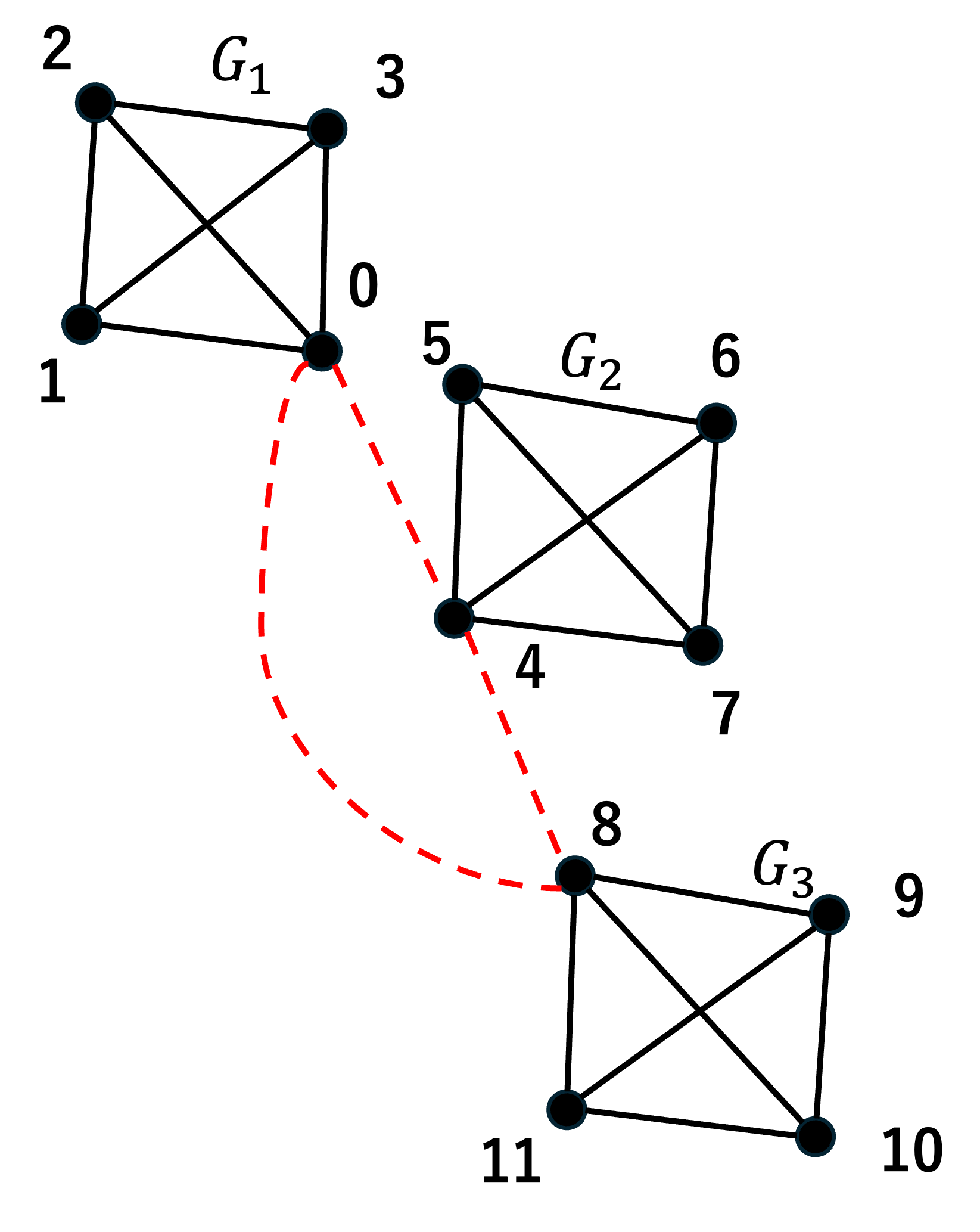}
  \end{minipage} \\

  \begin{minipage}{0.23\columnwidth}
    \centering
    \includegraphics[width=\columnwidth]{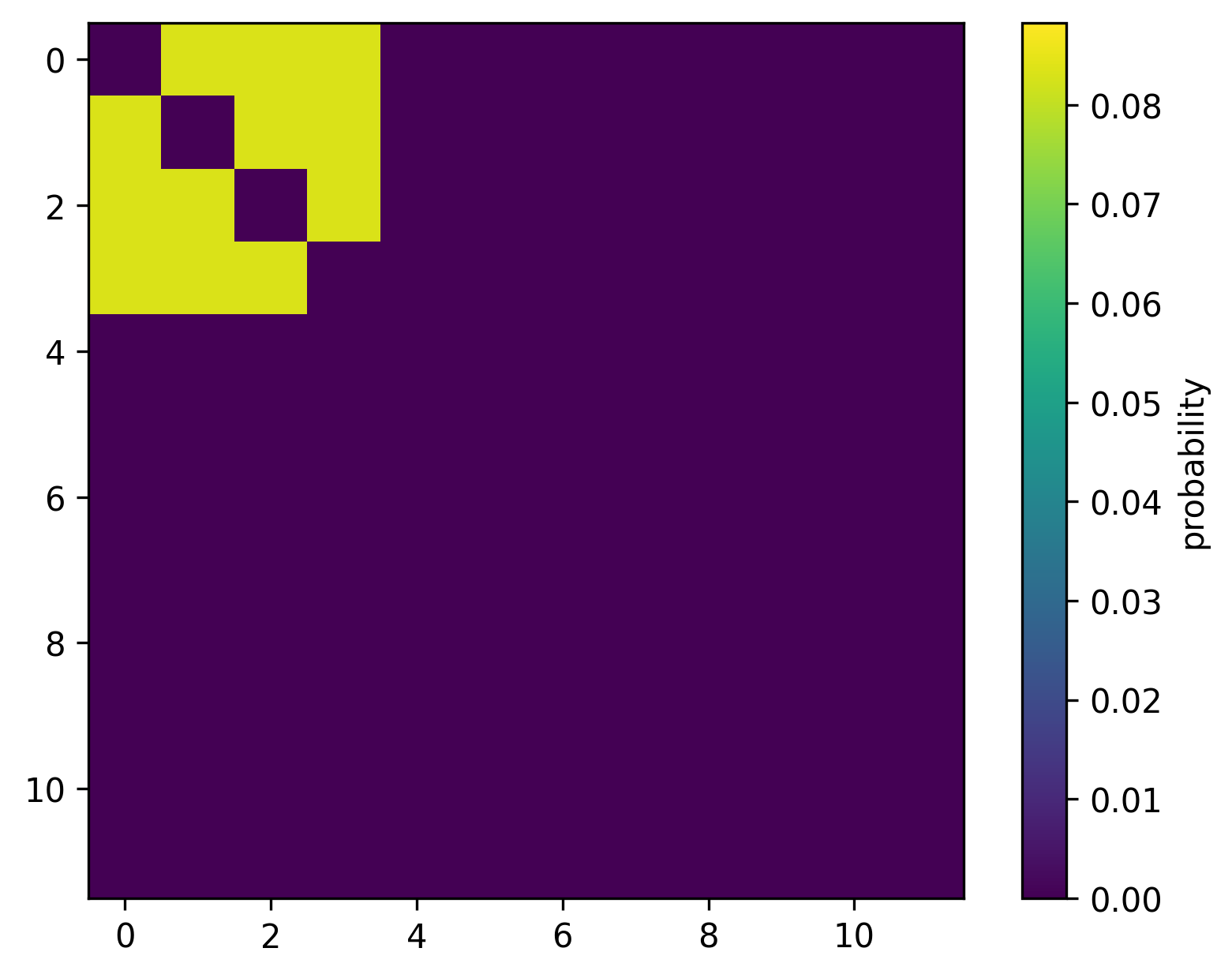}
    \subcaption{$t=0$}
  \end{minipage}
  \begin{minipage}{0.23\columnwidth}
    \centering
    \includegraphics[width=\columnwidth]{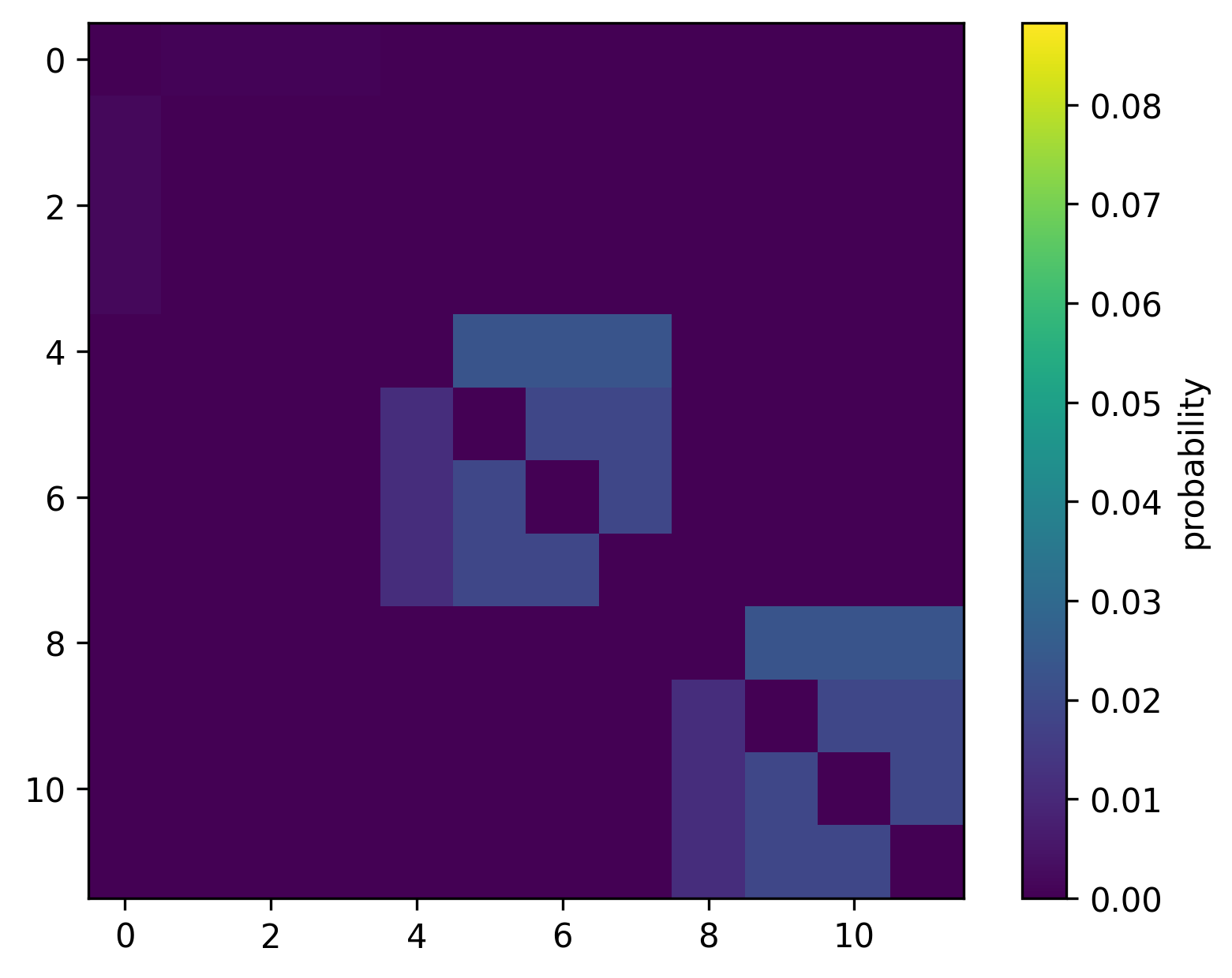}
    \subcaption{$t=13$}
  \end{minipage} 
  \begin{minipage}{0.23\columnwidth}
    \centering
    \includegraphics[width=\columnwidth]{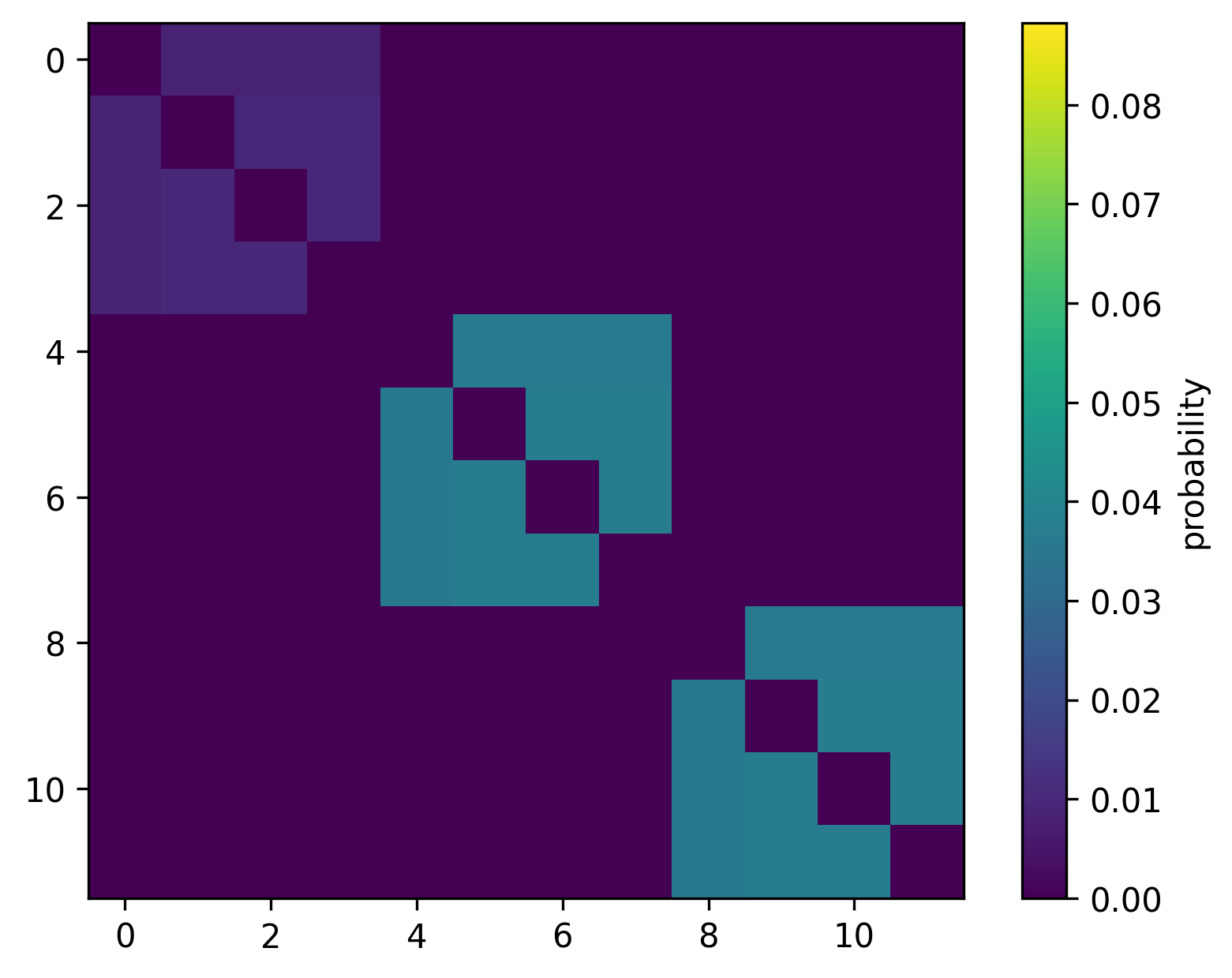}
    \subcaption{$t=20$}
  \end{minipage}
  \begin{minipage}{0.23\columnwidth}
    \centering
    \includegraphics[width=\columnwidth]{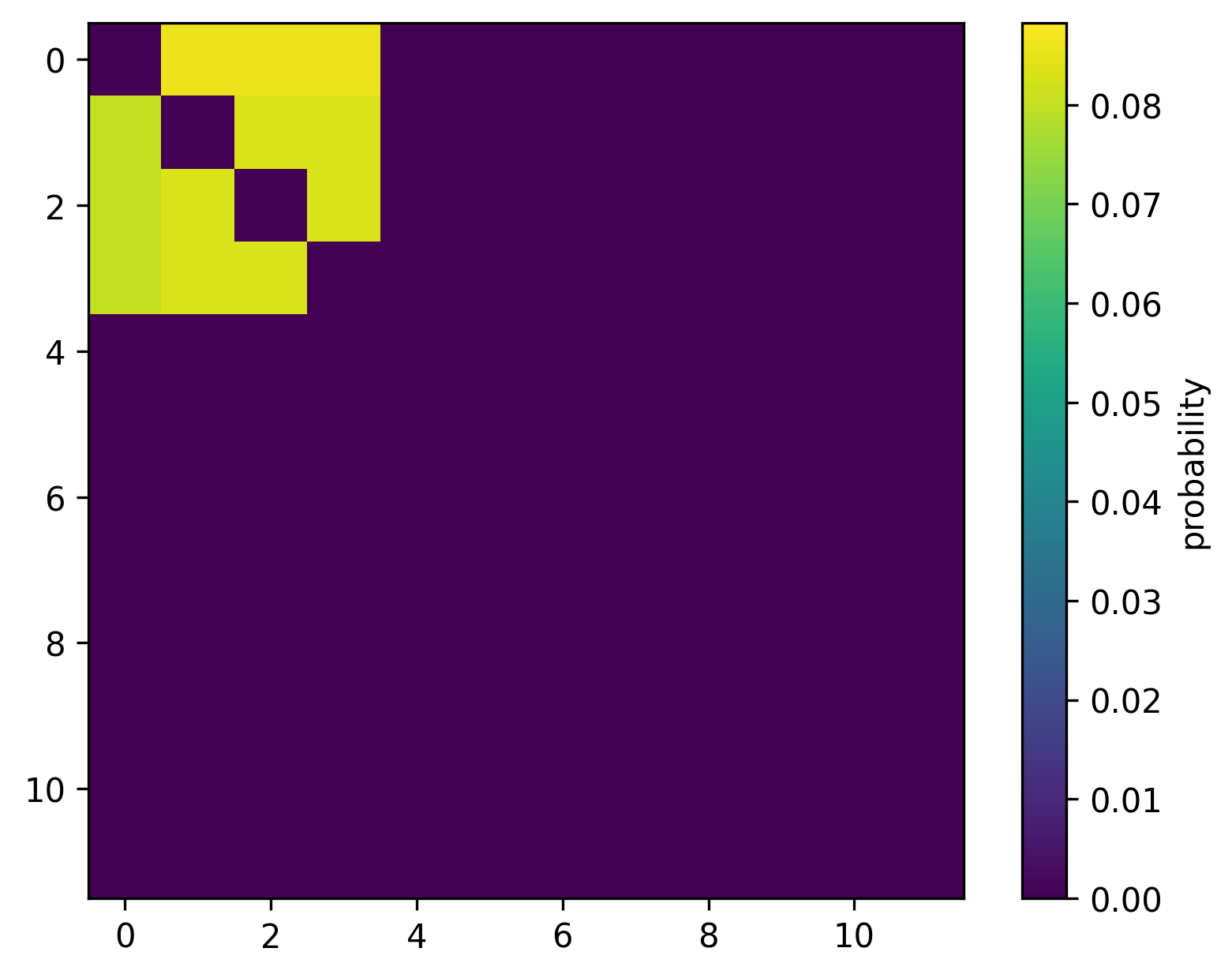}
    \subcaption{$t=40$}
  \end{minipage}
  \caption{
  Numerical simulation for the case \(\tilde G=C_3\) with \(G_1=G_2=G_3=K_4\) and \(\epsilon=0.05\). The upper-left panel compares the asymptotic finding probabilities predicted by Theorem~\ref{thm:prob_matrix} with the numerical simulation of the original quantum walk. The solid curves represent the theoretical values evaluated at the rescaled time \(\tau=t\sqrt{2\epsilon}\), while the dots represent the numerical results at each time step \(t\). The curves corresponding to \(G_2\) and \(G_3\) overlap and are therefore not visually distinguishable. The horizontal axis denotes the time step \(t\), and the vertical axis denotes the finding probability. The vertical dashed lines indicate the time steps corresponding to the heat maps shown below.
  The initial state is uniformly distributed on \(G_1\). The upper-right panel shows the original graph and the vertex labeling, where the red dashed edges represent the weak edges connecting the components. The lower panels show the probability distributions on the directed arcs at \(t=0,13,20,\) and \(40\). In each heat map, the \((i,j)\)-entry represents the probability on the directed arc from vertex \(i\) to vertex \(j\). The probabilities on the weak edges are omitted for clarity. Since the vertices in each component are labeled consecutively, the three diagonal blocks correspond to \(G_1\), \(G_2\), and \(G_3\), respectively.}
  \label{fig: cycle}
\end{figure}

\begin{figure}[htbp]
  \centering
  \begin{minipage}{0.55\columnwidth}
    \centering
    \includegraphics[width=\columnwidth]{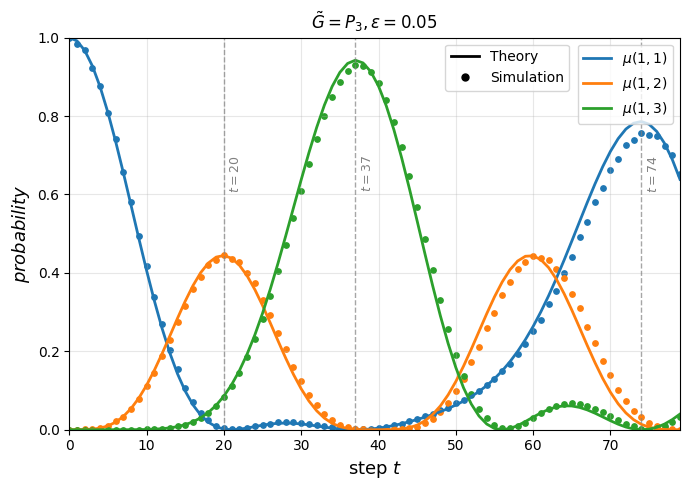}
  \end{minipage}
  \hspace{5mm}
  \begin{minipage}{0.35\columnwidth}
    \centering
    \includegraphics[width=\columnwidth]{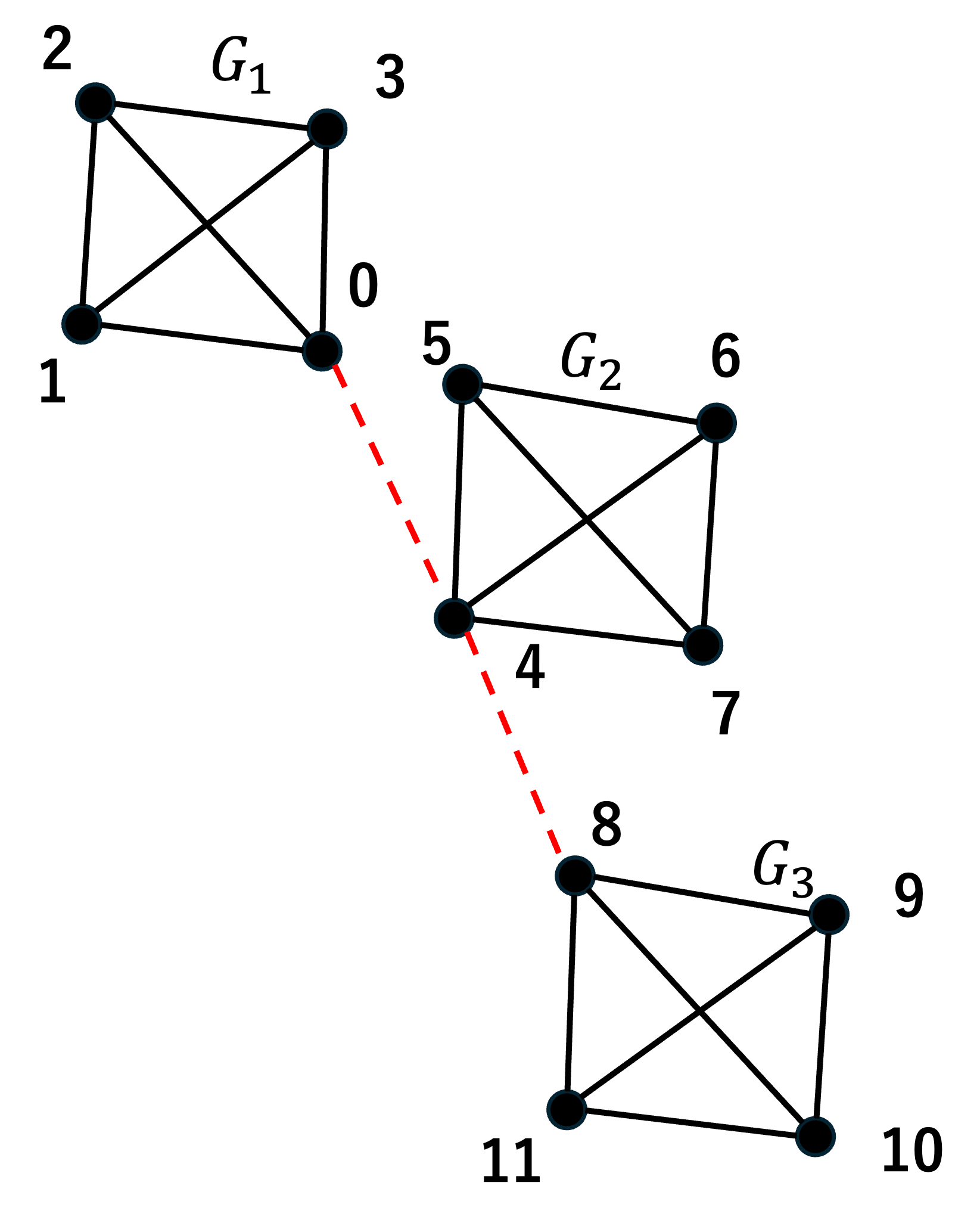}
  \end{minipage} \\

  \begin{minipage}{0.23\columnwidth}
    \centering
    \includegraphics[width=\columnwidth]{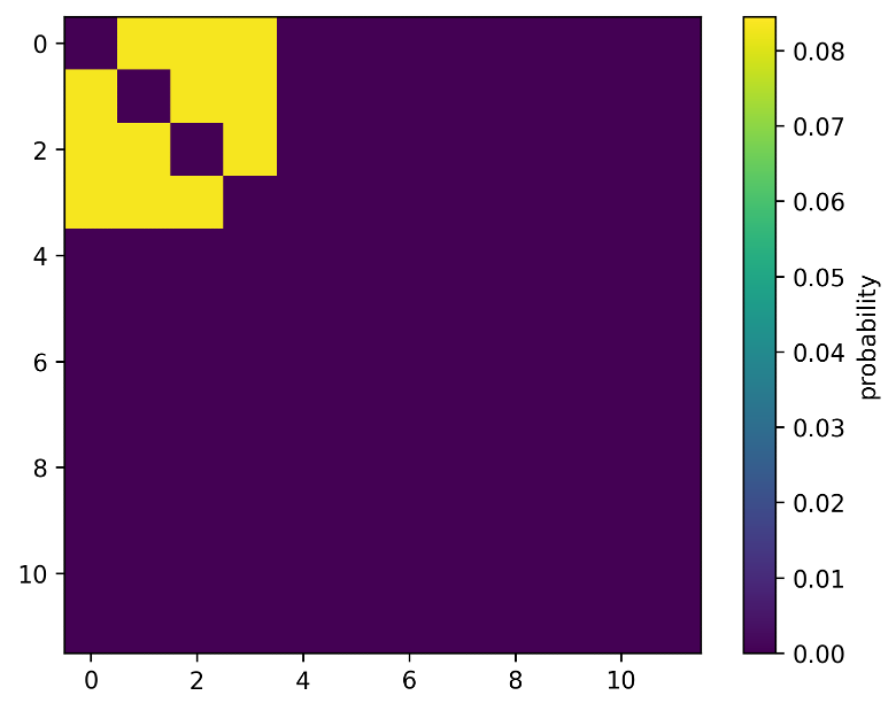}
    \subcaption{$t=0$}
  \end{minipage}
  \begin{minipage}{0.23\columnwidth}
    \centering
    \includegraphics[width=\columnwidth]{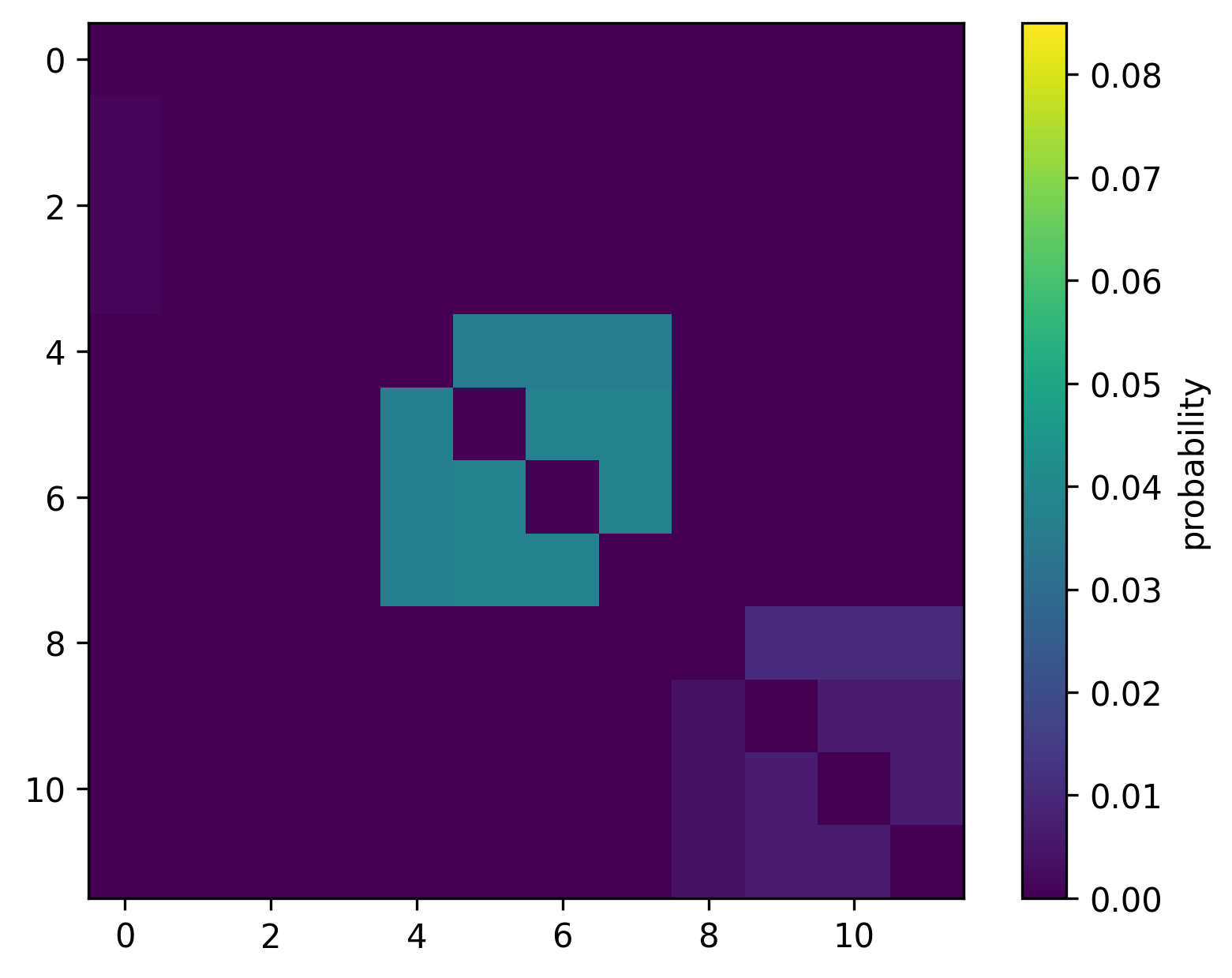}
    \subcaption{$t=20$}
  \end{minipage} 
  \begin{minipage}{0.23\columnwidth}
    \centering
    \includegraphics[width=\columnwidth]{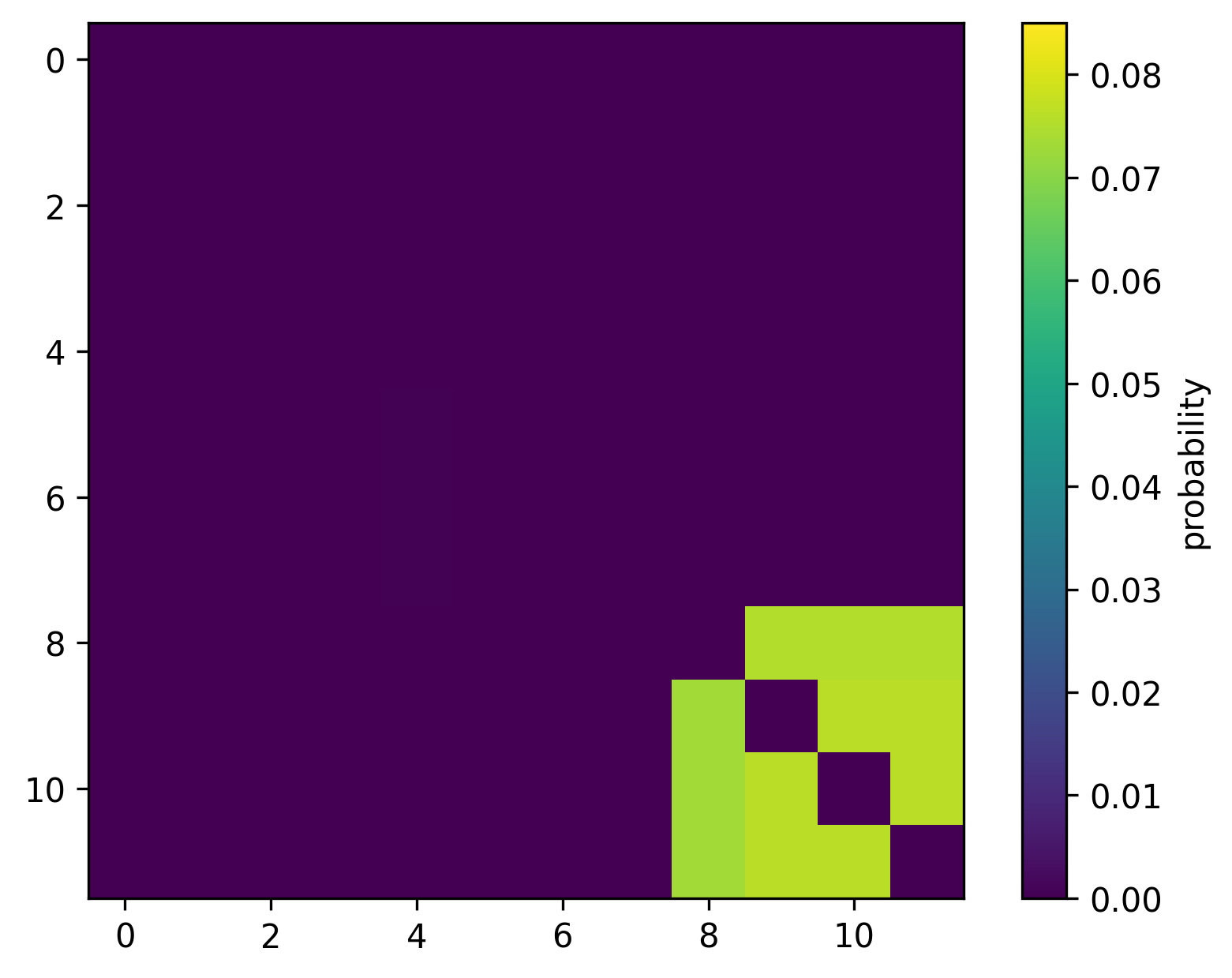}
    \subcaption{$t=37$}
  \end{minipage}
  \begin{minipage}{0.23\columnwidth}
    \centering
    \includegraphics[width=\columnwidth]{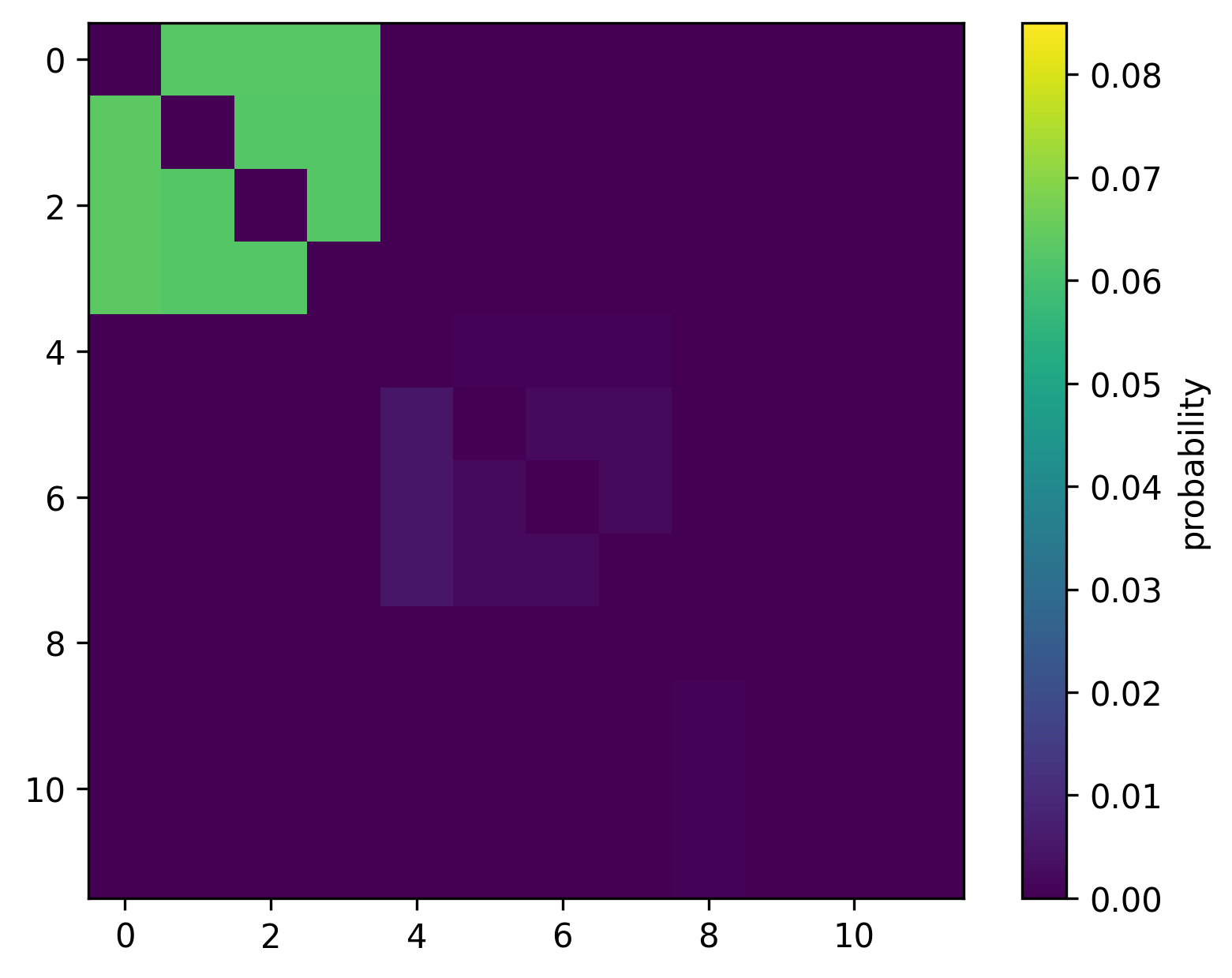}
    \subcaption{$t=74$}
  \end{minipage}
  \caption{
  Numerical simulation for the case \(\tilde G=P_3\) with \(G_1=G_2=G_3=K_4\) and \(\epsilon=0.05\). The upper-left panel compares the asymptotic finding probabilities predicted by Theorem~\ref{thm:prob_matrix} with the numerical simulation of the original quantum walk. The solid curves represent the theoretical values evaluated at the rescaled time \(\tau=t\sqrt{2\epsilon}\), while the dots represent the numerical results at each time step \(t\). The horizontal axis denotes the time step \(t\), and the vertical axis denotes the finding probability. The vertical dashed lines indicate the time steps corresponding to the heat maps shown below. The initial state is uniformly distributed on \(G_1\). The upper-right panel shows the original graph and the vertex labeling, where the red dashed edges represent the weak edges connecting the components. The lower panels show the probability distributions on the directed arcs at \(t=0,20,37,\) and \(74\). In each heat map, the \((i,j)\)-entry represents the probability on the directed arc from vertex \(i\) to vertex \(j\). The probabilities on the weak edges are omitted for clarity. The heat maps are interpreted in the same manner as in Fig.~\ref{fig: cycle}.
  }
  \label{fig: path}
\end{figure}


\clearpage

\subsection{Connection with a spring-mass system}

We now give a classical mechanical interpretation of the wave equation
obtained in Theorem~\ref{thm:prob_matrix}.
In particular, we show that the dynamics on the reduced graph has the same form as Newton's equation of motion for a spring-mass system.

Let $G=(X,A)$ be a simple connected graph with symmetric arc set $A$.
At each vertex $x\in X$, we place a point mass $m(x)>0$, and assign a
spring constant $k(a)>0$ to each arc $a\in A$. We assume
\begin{align*}
    k(a)=k(a^{-1})
\end{align*}
for every $a\in A$.

Let $u_\tau(x)$ denote the displacement of the mass at $x$ at time $\tau$.
Newton's equation of motion is given by
\begin{align}
\label{eq:motion}
    m(x)\frac{\partial^{2}}{\partial \tau^{2}}u_\tau(x)
    =
    -\sum_{a:t(a)=x}
    k(a)\bigl(u_\tau(x)-u_\tau(o(a))\bigr).
\end{align}

Let $D_k$ and $M_k$ be the weighted degree matrix and the weighted
adjacency matrix, respectively, defined by
\begin{align*}
    (D_kf)(x)
    &=
    \left(\sum_{a:t(a)=x}k(a)\right)f(x),\\
    (M_kf)(x)
    &=
    \sum_{a:t(a)=x}k(a)f(o(a)),
\end{align*}
for $f \in \mathbb{C}^{X}$.
We denote the corresponding weighted Laplacian by
\begin{align*}
    L_k=D_k-M_k.
\end{align*}
Furthermore, let $\mathcal{M}$ be the diagonal mass matrix defined by
\begin{align*}
    (\mathcal{M}f)(x)=m(x)f(x).
\end{align*}
Then Eq.~\eqref{eq:motion} can be written as
\begin{align*}
    \mathcal{M}\frac{\partial^{2}}{\partial t^{2}}\boldsymbol{u}_\tau
    =
    -L_k\boldsymbol{u}_\tau.
\end{align*}
Here, we set
\begin{align*}
    \boldsymbol{v}_\tau
    =
    \mathcal{M}^{1/2}\boldsymbol{u}_\tau.
\end{align*}
Then, we obtain
\begin{align}
\label{eq:spring_laplacian}
    \frac{\partial^{2}}{\partial \tau^{2}}\boldsymbol{v}_\tau
    =
    -\mathcal{M}^{-1/2}L_k\mathcal{M}^{-1/2}
    \boldsymbol{v}_\tau.
\end{align}
Thus, the motion of the spring-mass system is governed by the symmetric
operator $\mathcal{M}^{-1/2}L_k\mathcal{M}^{-1/2}$.

We now apply this formulation to the reduced graph $\tilde{G}$.
For each $V\in\tilde{X}$, assign the mass
\begin{align*}
    m(V)=|A_V|,
\end{align*}
and for $V,W\in\tilde{X}$, assign the spring constant
\begin{align*}
    k(V,W)=\ell(V,W).
\end{align*}
Then the mass matrix satisfies
\begin{align*}
    \mathcal{M}^{-1}=R,
\end{align*}
while the weighted Laplacian of the spring-mass system is precisely the
Laplacian $L$ on the reduced graph. Hence,
\begin{align*}
    \mathcal{M}^{-1/2}L\mathcal{M}^{-1/2}
    =
    R^{1/2}LR^{1/2}
    =
    L_{\mathrm{sym}}.
\end{align*}

Therefore, Eq.~\eqref{eq:spring_laplacian} becomes
\begin{align*}
    \frac{\partial^{2}}{\partial \tau^{2}}\boldsymbol{v}_\tau
    =
    -L_{\mathrm{sym}}\boldsymbol{v}_\tau,
\end{align*}
which has the same form as the wave equation of 
Eq.~\eqref{eq: wave_reduced} in Theorem~\ref{thm:prob_matrix}. 
Then, if the quantum walk starts from the connected component $X_j$, the modified displacement of the mass $\boldsymbol{v}_\tau$ at time $\tau$ in the spring mass system on $\tilde{G}$ with the initial condition $\boldsymbol{u}_0(x)=\delta_{x,j}$ and $(\partial \boldsymbol{u}_\tau/\partial \tau)|_{\tau=0}=0$ coincides with the amplitude $\Phi_\tau(j,\cdot)$ of Eq.~\eqref{eq: wave_reduced} in Theorem~\ref{thm:prob_matrix}. 
This concludes that the wave equation describing the asymptotic transport between the components in our quantum walk system can be interpreted as the equation of motion of a classical spring-mass system on the reduced graph. The correspondence between the two models is summarized in Table~\ref{tab:spring_mass}.

\begin{table}[htbp]
\centering
\renewcommand{\arraystretch}{1.5}
\begin{tabular}{ll|ll}
\hline
\multicolumn{2}{c|}{{\bf Quantum walk system}} 
&\multicolumn{2}{c}{{\bf Spring-mass system}} \\
\hline
$V\in \tilde{X}$ & (connected component)  & $V\in \tilde{X}$ & (point mass) \\
$|A_V|$ & ($\sharp$ arcs in $V$)  & $m(V)$ &(mass at point mass $V$) \\
$\ell(V,W)$ & ($\sharp$ edges between $V$ and $W$)  & $k(V,W)$ & (spring constant between $V$ and $W$)  \\
\hline
\end{tabular}
\caption{Correspondence between the quantum walk system and the spring-mass system.}
\label{tab:spring_mass}
\end{table}

\section{Proof of main theorem (Theorem~\ref{thm:prob_matrix})}
\label{subsec_proof_many}
\subsection{Spectral analysis of the perturbed transition matrix}
 
 Let $W(\epsilon): \mathbb{C}^{X} \to \mathbb{C}^{X}$ be the transition matrix on $G$ defined by
\begin{align}
    \label{eq: RW}
    (W(\epsilon)f)(x) = \sum_{a: o(a)=x} p_{\epsilon}(a)f(t(a))
\end{align}
for $f \in \mathbb{C}^{X}$ and $x \in X$. 
By definition, the transition probability $p_{\epsilon}(a)$ can be expanded in terms of $\epsilon$ as follows:
\begin{align*}
    p_{\epsilon}(a) =
    \begin{dcases}
        \frac{1}{\deg(o(a))} - \frac{\deg_{B}(o(a))}{\deg(o(a))^{2}}\epsilon + O(\epsilon^{2}) &: o(a) \in \delta X \text{ and } a \notin B, \\
        \frac{1}{\deg(o(a))} &: o(a) \notin \delta X \text{ and } a \notin B, \\
        \frac{1}{\deg(o(a))}\epsilon + O(\epsilon^{2}) &: a \in B.
    \end{dcases}
\end{align*}
Here, $\delta X \subset X$ denote the set of boundary vertices given by
\begin{align*}
    \delta X=\{x \in X : \deg_B(x) > 0\}.
\end{align*}
Consequently, the transition matrix $W(\epsilon)$ can be expanded as
\begin{align*}
    (W(\epsilon)f)(x) = (W^{(0)}f)(x) + \epsilon(W^{(1)}f)(x) + O(\epsilon^{2}),
\end{align*}
where the unperturbed term $W^{(0)}$ and the first-order perturbation term $W^{(1)}$ are explicitly expressed as
\begin{align*}
    (W^{(0)}f)(x) &= \sum_{\substack{a:o(a)=x \\ a \notin B}} \frac{1}{\deg(o(a))} f(t(a)), \\
    (W^{(1)}f)(x) &= \sum_{\substack{a:o(a)=x \\ a \in B}} \frac{1}{\deg(o(a))} f(t(a)) - \sum_{\substack{a:o(a)=x \\ o(a) \in \delta X, a \notin B}} \frac{\deg_{B}(o(a))}{\deg(o(a))^{2}} f(t(a)).
\end{align*}

Note that $W^{(0)}$ is the transition matrix of the simple random walk on $G\setminus B$, that is, $W^{(0)}$ can be decomposed into the direct sum
\begin{align}
    \label{eq: W0}
    W^{(0)}=\bigoplus_{j=1}^{n}W_{j}
\end{align}
where $W_{j}$ is the transition matrix of the simple random walk on $G_{j}$ and $n$ is the number of connected components for $\epsilon=0$. From the property of the transition matrices,
$W_j$ has the simple eigenvalue $1$ and its eigenprojection is given by
\begin{align}
    \label{eq: eigenprojection_Wj}
    \Pi_{j}=K_{j}M_{j}^{*}
\end{align}
where the operators $K_{j}, M_{j}: \mathbb{C}\rightarrow\mathbb{C}^{X_{j}}$ correspond to right and left eigenvectors of eigenvalue $1$, respectively, which are defined by
\begin{align*}
    (K_{j}c)(x)=c, \quad (M_{j}c)(x)=m_{j}(x)\cdot c
\end{align*}
for $c \in \mathbb{C}$ and $ x \in X_{j}$.
Here, $m_{j}$ is the stationary distribution on the connected component $G_{j}$ defined by
\begin{align*}
    m_{j}(x)=\frac{\deg(x)}{|A_{j}|}
\end{align*}
for $x \in X_{j}$. Then, by Eqs.~\eqref{eq: W0} and \eqref{eq: eigenprojection_Wj}, the eigenprojection $\Pi$ onto the eigenspace of $W^{(0)}$ corresponding to the eigenvalue $1$ is given by
\begin{align}
    \label{eq: pi}
    \Pi=K_{\pi}M_{\pi}^{*}
\end{align}
where the operators $K_{\pi}, M_{\pi}: \mathbb{C}^{\tilde{X}}\rightarrow\mathbb{C}^{X}$ are defined by
\begin{align*}
    K_{\pi}=\bigoplus_{j=1}^{n}K_{j}, \quad M_{\pi}=\bigoplus_{j=1}^{n}M_{j}.
\end{align*}
By applying the definitions of the direct sums, we can explicitly write down the actions of $K_{\pi}$ and $M_{\pi}$. For convenience, we denote the stationary distribution on the component containing $x$ as $m_{\pi(x)}$, that  is, $m_{\pi(x)} = m_{j}$ if $x \in X_{j}$. 
Then, we immediately obtain
\begin{align*}
    (K_{\pi}\tilde{f})(x) &= \tilde{f}(\pi(x)), \qquad \qquad \quad (K^{*}_{\pi}f)(V)=\sum_{x: \pi(x)=V} f(x), \\
    (M_{\pi}\tilde{f})(x) &= m_{\pi(x)}(x)\tilde{f}(\pi(x)),\quad (M^{*}_{\pi}f)(V)= \sum_{x: \pi(x)=V} \overline{m_{\pi(x)}(x)}f(x)
\end{align*}
for any $\tilde{f} \in \mathbb{C}^{\tilde{X}}$, $f \in \mathbb{C}^{X}$, $x \in X$, and $V \in \tilde{X}$.
Note that $M_{\pi}^{*}K_{\pi}$ is the identity operator in $\mathbb{C}^{\tilde{X}}$.

We should remark that the eigenvalue $1$ of $W(0)$ has the multiplicity $n$ (number of connected components), but obviously it is semi-simple.
According to perturbation theory \cite{kato1982}, the eigenvalue \(1\) of \(W(0)\) splits under the perturbation, and we can apply the reduction process as follows. The first-order splitting is determined by the eigeninformation of the reduced operator
\begin{align}
\label{eq: W1hat}
\hat{W}^{(1)}=\Pi W^{(1)}\Pi.
\end{align}
If $\alpha \in \mathrm{Spec}(\hat{W}^{(1)}|_{\mathrm{Ran}(\Pi)})$ has multiplicity $m_{\alpha}$, and is semi-simple, then 
the perturbed eigenvalue of $W(\epsilon)$ splitting from the non-perturbed eigenvalue $1$ can be expanded by
\begin{align*}
\nu_{\alpha,s}(\epsilon)=1+\epsilon\alpha+O(\epsilon^{2}),\quad s=1,...,m_{\alpha}.
\end{align*}
Let \(P_\alpha\) be the total eigenprojection of \(\hat W^{(1)}|_{\operatorname{Ran}(\Pi)}\) associated with \(\alpha\), and let \(P_\alpha^{(\epsilon)}\) be the total eigenprojection of \(W(\epsilon)\) associated with $\nu_{\alpha,s}(\epsilon)$. Then, it follows that
\begin{align*}
    P_\alpha^{(\epsilon)}=P_\alpha+O(\epsilon).
\end{align*}
The expansions for terms of the second order and higher also can be done in a similar fashion to the above as long as the corresponding non-perturbed matrix to ${\hat{W}}^{(1)}$ at each time step is semi-simple. We note that since $W(\epsilon)$ is similar to a self-adjoint matrix (see Eq.~(\ref{eq:similar})), every perturbed eigenvalue is semi-simple for any $\epsilon\geq 0$. Thus, the reduction process comes to end until finite steps and the resulting total eigenprojection at the finial step provides the completely expansion of the eigenprojection itself without nilpotent. Therefore, every perturbed eigenprojection of the perturbed eigenvalue splitting from the non-perturbed eigenvalue $1$ is analytic at $1$. 
Thus, we can choose the corresponding eigenvectors $g_{\alpha, s}^{(\epsilon)} \in \mathrm{Ker}(W(\epsilon)-\nu_{\alpha,s}(\epsilon) I)$, such that
\begin{align*}
    g_{\alpha, s}^{(\epsilon)}=g_{\alpha, s}^{(0)}+O(\epsilon).
\end{align*}
where $g_{\alpha, s}^{(0)} \in \mathrm{Ker}(\hat{W}^{(1)}-\alpha I)$. 

Thus, the expansion until the first order with respect to $\epsilon$ of the eigenvalues and its eigenprojections of $W(\epsilon)$ that converge to $1$ for $\epsilon\to 0$ is reduced to the spectral analysis of $\hat{W}^{(1)}$. The following proposition shows that this operator can be explicitly expressed in terms of the Laplacian on the reduced graph $\tilde{G}$.

\begin{proposition}
    \label{prop: laplacian}
    Let $L$ and $R$ be defined by Eqs.~(\ref{eq: laplacian}) and (\ref{eq: def_R}), respectively. Then, it follows that
    \begin{align*}
        \hat{W}^{(1)}=-K_{\pi}RLM_{\pi}^{*}.
    \end{align*}
\end{proposition}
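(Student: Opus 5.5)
Since $\Pi=K_{\pi}M_{\pi}^{*}$ and $M_{\pi}^{*}K_{\pi}=I_{\tilde X}$, we have $\hat W^{(1)}=K_{\pi}\,(M_{\pi}^{*}W^{(1)}K_{\pi})\,M_{\pi}^{*}$. It therefore suffices to prove the identity on the reduced space,
\begin{align*}
M_{\pi}^{*}W^{(1)}K_{\pi}=-RL
\end{align*}
as operators on $\mathbb{C}^{\tilde X}$. I will verify this entrywise by applying both sides to $\tilde f\in\mathbb{C}^{\tilde X}$ and evaluating at $V\in\tilde X$.

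First compute $W^{(1)}K_{\pi}\tilde f$ at a vertex $x$. For the weak-arc term, $t(a)$ lies in another component, possibly $V$ itself if there is a self-loop in $\tilde G$, and $(K_\pi\tilde f)(t(a))=\tilde f(\pi(t(a)))$. This gives
\begin{align*}
\frac{1}{\deg(x)}\sum_{a\in B,\,o(a)=x}\tilde f(\pi(t(a))).
\end{align*}
For the second term, every $a\notin B$ with $o(a)=x$ has $\pi(t(a))=\pi(x)$, and there are exactly $\deg(x)$ such arcs. The term therefore collapses to
\begin{align*}
-\frac{\deg_B(x)}{\deg(x)}\tilde f(\pi(x)).
\end{align*}
This formula also holds for $x\notin\delta X$, since then both terms vanish.

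Next apply $M_{\pi}^{*}$, which multiplies by $m_V(x)=\deg(x)/|A_V|$ and sums over $x\in V$. The factor $\deg(x)$ cancels the denominators, giving
\begin{align*}
\frac{1}{|A_V|}\Big(\sum_{x\in V}\sum_{a\in B,\,o(a)=x}\tilde f(\pi(t(a)))-\sum_{x\in V}\deg_B(x)\,\tilde f(V)\Big).
\end{align*}
Grouping the double sum by the target component $W=\pi(t(a))$ gives $\sum_W \ell(V,W)\tilde f(W)=(M_B\tilde f)(V)$. The second sum is $\deg_{\tilde G}(V)\tilde f(V)=(D_B\tilde f)(V)$. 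Hence the whole expression equals $-\frac{1}{|A_V|}(L\tilde f)(V)=-(RL\tilde f)(V)$, which proves the claim.

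The one point needing care is the self-loop and multi-arc bookkeeping. Weak arcs with both endpoints in $V$ contribute to both $\ell(V,V)$ and $\deg_{\tilde G}(V)$, so they cancel in $L$ exactly as they do in the computation above. This agrees with the paper's convention $\deg_{\tilde G}(V)=\sum_W\ell(V,W)$, so the identification $D_B-M_B=L$ should be consistent once that convention is checked. A second minor check is that the first-order expansion of $p_\epsilon$ on non-boundary vertices introduces no extra terms; this holds because $\deg_B(x)=0$ there.
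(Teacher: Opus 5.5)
Your proposal is correct and matches the paper's proof. Both reduce to the identity $M_{\pi}^{*}W^{(1)}K_{\pi}=-RL$ and verify it pointwise: collapse the internal-arc term using $\pi(t(a))=\pi(x)$, cancel $\deg(x)$ against $m_V(x)=\deg(x)/|A_V|$, group weak arcs by target component, and finally sandwich with $\Pi=K_{\pi}M_{\pi}^{*}$. Your remarks on self-loops in $\tilde G$ and on non-boundary vertices agree with the paper's convention $\deg_{\tilde G}(V)=\sum_W\ell(V,W)$; they spell out points the paper leaves implicit.
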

\begin{proof}
We evaluate the action of the operator $M^{*}_{\pi}W^{(1)}K_{\pi}$.
For any $\tilde{f} \in \mathbb{C}^{\tilde{X}}$ and $ V \in \tilde{X}$, we obtain
\begin{align*}
&(M_{\pi}^{*}W^{(1)}K_{\pi}\tilde{f})(V) \\
&=
\sum_{x:\pi(x)=V}
m_V(x)(W^{(1)}K_{\pi}\tilde{f})(x)\\
&=
\frac{1}{|A_V|}
\left(
\sum_{x:\pi(x)=V}
\sum_{\substack{a:o(a)=x\\a\in B}}
\tilde{f}(\pi(t(a)))
-
\sum_{x:\pi(x)=V}
\deg_B(x)\tilde{f}(V)
\right)\\
&=
\frac{1}{|A_V|}
\left(
\sum_{W\in\tilde X}\ell(V,W)\tilde{f}(W)
-
\deg_{\tilde G}(V)\tilde{f}(V)
\right)\\
&=
-(RL\tilde{f})(V).
\end{align*}
Therefore, by Eqs. (\ref{eq: pi}), (\ref{eq: W1hat}) and the above formula, we have
\begin{align*}
\hat{W}^{(1)}
&=\Pi W^{(1)}\Pi\\
&=K_{\pi}M_{\pi}^{*}W^{(1)}K_{\pi}M_{\pi}^{*}\\
&=-K_{\pi}RLM_{\pi}^{*}.
\end{align*}
This completes the proof.
Hence, we obtain the desired result.
\end{proof}
As implied by Proposition~\ref{prop: laplacian}, the spectrum of $\hat{W}^{(1)}\in \mathbb{C}^{V\times V}$ is completely characterized by that of the operator $RL\in \mathbb{C}^{\tilde{X}\times\tilde{X}}$. The following proposition explicitly constructs the eigenvectors of $\hat{W}^{(1)}$ by lifting those of $RL$ using $K_\pi:\mathbb{C}^{\tilde{X}}\to \mathbb{C}^V$.
\begin{lemma}
    \label{lem: e-val}
    Let $\tilde{g}_{\lambda} \in \mathrm{Ker}(RL-\lambda I)$ be an eigenvector on $\tilde{G}$. Then, we have
    \begin{align*}
        K_{\pi}\tilde{g}_{\lambda} \in \mathrm{Ker}(\hat{W}^{(1)}+\lambda I).
    \end{align*}
    Additionally, the eigenvector $K_{\pi}\tilde{g}_{\lambda}$ belongs to the eigenspace $\mathrm{Range}(\Pi)$.
\end{lemma}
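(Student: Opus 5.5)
The plan is to compute directly with the factorization $\hat{W}^{(1)}=-K_{\pi}RLM_{\pi}^{*}$ from Proposition~\ref{prop: laplacian}, together with the identity $M_{\pi}^{*}K_{\pi}=I_{\mathbb{C}^{\tilde{X}}}$ noted earlier. The key observation is that the right factor $M_{\pi}^{*}$ exactly cancels the lift $K_{\pi}$. The problem on $\mathbb{C}^{X}$ therefore reduces to the eigenvalue problem for $RL$ on $\mathbb{C}^{\tilde X}$.

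First, I will verify $M_{\pi}^{*}K_{\pi}=I$ explicitly. For $\tilde f\in\mathbb{C}^{\tilde X}$ and $V\in\tilde X$,
\begin{align*}
(M_{\pi}^{*}K_{\pi}\tilde f)(V)=\sum_{x:\pi(x)=V} m_V(x)\tilde f(V)=\tilde f(V),
\end{align*}
since $\sum_{x\in V}\deg(x)=|A_V|$. Then, for $\tilde g_\lambda$ with $RL\tilde g_\lambda=\lambda\tilde g_\lambda$, I will compute
\begin{align*}
\hat{W}^{(1)}K_{\pi}\tilde g_\lambda=-K_{\pi}RL\,M_{\pi}^{*}K_{\pi}\tilde g_\lambda=-K_{\pi}RL\tilde g_\lambda=-\lambda K_{\pi}\tilde g_\lambda .
\end{align*}
This gives $(\hat W^{(1)}+\lambda I)K_\pi\tilde g_\lambda=0$, which is the first claim.

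For the range statement, I will use $\Pi=K_{\pi}M_{\pi}^{*}$ from Eq.~\eqref{eq: pi}. This gives $\Pi K_{\pi}\tilde g_\lambda=K_{\pi}(M_{\pi}^{*}K_{\pi})\tilde g_\lambda=K_{\pi}\tilde g_\lambda$, so $K_\pi\tilde g_\lambda\in\mathrm{Range}(\Pi)$. Alternatively, $K_\pi\tilde g_\lambda$ is constant on each component, hence a combination of the right eigenvectors $K_j$ of $W_j$ for eigenvalue $1$.

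The main subtlety is only to check that the lifted vector is nonzero. This holds because $K_\pi$ is injective (it has the left inverse $M_\pi^{*}$), so $K_\pi\tilde g_\lambda\neq 0$ whenever $\tilde g_\lambda\neq0$. No genuine obstacle is expected beyond confirming the normalization $\sum_{x\in V}m_V(x)=1$.
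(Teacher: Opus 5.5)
Your proposal is correct and follows the paper's proof essentially verbatim: both use the factorization $\hat{W}^{(1)}=-K_{\pi}RLM_{\pi}^{*}$ and the cancellation $M_{\pi}^{*}K_{\pi}=I$ to obtain the eigenvalue $-\lambda$, and then use $\Pi=K_{\pi}M_{\pi}^{*}$ for the range statement. Your explicit check that $\sum_{x\in V}\deg(x)=|A_V|$ and your remark that $K_\pi$ is injective, so the lifted vector is nonzero, are small additions that the paper leaves implicit.
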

\begin{proof}
    Since $RL\tilde{g}_{\lambda}=\lambda\tilde{g}_{\lambda}$ and $M_{\pi}^{*}K_{\pi}=I$ hold, we have
    \begin{align*}
        -K_{\pi}RLM_{\pi}^{*}K_{\pi}\tilde{g}_{\lambda}&=-\lambda K_{\pi}\tilde{g}_{\lambda}.
    \end{align*}
    Then, from Proposition~\ref{prop: laplacian}, we immediately see
    \begin{align*}
        \hat{W}^{(1)}K_{\pi}\tilde{g}_{\lambda}&=-\lambda K_{\pi}\tilde{g}_{\lambda}.
    \end{align*}
    Thus, $K_{\pi}\tilde{g}_{\lambda} \in \mathrm{Ker}(\hat{W}^{(1)}+\lambda I)$ holds.
    Moreover, by the definition of $\Pi = K_{\pi}M_{\pi}^{*}$, it follows that
    \begin{align*}
        \Pi (K_{\pi}\tilde{g}_{\lambda}) = K_{\pi}(M_{\pi}^{*}K_{\pi})\tilde{g}_{\lambda} = K_{\pi}\tilde{g}_{\lambda},
    \end{align*}
    which implies $K_{\pi}\tilde{g}_{\lambda} \in \mathrm{Range}(\Pi)$. This completes the proof.
\end{proof}
From Proposition~\ref{prop: laplacian} and Lemma~\ref{lem: e-val}, the spectral analysis of $W(\epsilon)$ around $1$ is reduced to that of $RL$. 
However, the matrix $RL$ is not symmetric in general, making the problem more complicated.
To address this, we rewrite Lemma~\ref{lem: e-val} using the symmetric matrix $L_{\mathrm{sym}}$ defined in Eq.~(\ref{eq: sym_laplacian}).

\begin{lemma}
    \label{lem: e-val_sym}
    Let $\tilde{f}_{\lambda} \in \mathrm{Ker}(L_{\mathrm{sym}}-\lambda I)$ be an eigenvector on $\tilde{G}$. Then, it follows that
    \begin{align*}
        K_{\pi}R^{1/2}\tilde{f}_{\lambda} \in \mathrm{Ker}(\hat{W}^{(1)}+\lambda I).
    \end{align*}
    Additionally, the eigenvector $K_{\pi}R^{1/2}\tilde{f}_{\lambda}$ belongs to the eigenspace $\mathrm{Range}(\Pi)$.
\end{lemma}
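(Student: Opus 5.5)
The plan is to reduce Lemma~\ref{lem: e-val_sym} to Lemma~\ref{lem: e-val} through the similarity between $L_{\mathrm{sym}}$ and $RL$. Set $\tilde{g}_{\lambda}:=R^{1/2}\tilde{f}_{\lambda}$. Since $R$ is a diagonal operator with strictly positive entries $1/|A_V|$ (here we use the standing assumption $|A_j|>0$), both $R^{1/2}$ and $R^{-1/2}$ are well defined diagonal operators on $\mathbb{C}^{\tilde X}$.

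The key computation is the intertwining identity
\begin{align*}
RL\,R^{1/2} = R^{1/2}\bigl(R^{1/2}LR^{1/2}\bigr) = R^{1/2}L_{\mathrm{sym}}.
\end{align*}
It uses only $R=R^{1/2}R^{1/2}$ and the definition in Eq.~(\ref{eq: sym_laplacian}). Applying it to $\tilde{f}_{\lambda}$ and using $L_{\mathrm{sym}}\tilde{f}_{\lambda}=\lambda\tilde{f}_{\lambda}$ gives
\begin{align*}
RL\tilde{g}_{\lambda}=R^{1/2}L_{\mathrm{sym}}\tilde{f}_{\lambda}=\lambda R^{1/2}\tilde{f}_{\lambda}=\lambda\tilde{g}_{\lambda},
\end{align*}
so $\tilde{g}_{\lambda}\in\mathrm{Ker}(RL-\lambda I)$. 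Invertibility of $R^{1/2}$ guarantees $\tilde{g}_{\lambda}\neq 0$ whenever $\tilde{f}_{\lambda}\neq0$, so it is a genuine eigenvector.

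We then apply Lemma~\ref{lem: e-val} directly to $\tilde{g}_{\lambda}$. This gives $K_{\pi}R^{1/2}\tilde{f}_{\lambda}=K_{\pi}\tilde{g}_{\lambda}\in\mathrm{Ker}(\hat{W}^{(1)}+\lambda I)$ and $K_{\pi}R^{1/2}\tilde{f}_{\lambda}\in\mathrm{Range}(\Pi)$, which is both parts of the statement. Alternatively, the range claim can be read off from $\Pi K_{\pi}=K_{\pi}M_{\pi}^{*}K_{\pi}=K_{\pi}$.

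There is no real obstacle. The only point needing care is positivity of $R$, so that $R^{1/2}$ is invertible and the eigenvector stays nonzero.
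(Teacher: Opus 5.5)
Your proof is correct and matches the paper's argument: set $\tilde g_\lambda = R^{1/2}\tilde f_\lambda$, use $L_{\mathrm{sym}} = R^{1/2}LR^{1/2}$ to obtain $RL\tilde g_\lambda = \lambda\tilde g_\lambda$, and then apply Lemma~\ref{lem: e-val}. You also note that $R^{1/2}$ is invertible because $|A_V|>0$, which makes the lifted vector genuinely nonzero (together with injectivity of $K_\pi$, which follows from $M_\pi^{*}K_\pi = I$); the paper leaves this detail implicit.
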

\begin{proof}
    Since $L_{\mathrm{sym}}\tilde{f}_{\lambda} = \lambda\tilde{f}_{\lambda}$ and $L_{\mathrm{sym}} = R^{1/2}LR^{1/2}$, we immediately see
    \begin{align*}
        RL(R^{1/2}\tilde{f}_{\lambda}) = \lambda(R^{1/2}\tilde{f}_{\lambda}).
    \end{align*}
    This implies $R^{1/2}\tilde{f}_{\lambda} \in \mathrm{Ker}(RL-\lambda I)$. 
    Applying Proposition~\ref{lem: e-val} with $\tilde{g}_{\lambda} = R^{1/2}\tilde{f}_{\lambda}$, we immediately obtain the desired result.
\end{proof}
Lemma~\ref{lem: e-val_sym} shows that the spectral analysis of $W(\epsilon)$ is reduced to that of $L_{\mathrm{sym}}$.
By combining the previous results, we establish the following key lemma.

\begin{lemma}
\label{lem: eigenvalue_expansion}
Let $W(\epsilon)$ be the transition matrix on $G$ defined in Eq.~\eqref{eq: RW}.
Let $\lambda\in\operatorname{Spec}(L_{\mathrm{sym}})$ have multiplicity
$m_\lambda$. Then $m_\lambda$ eigenvalues of $W(\epsilon)$, counted with
multiplicity, split from $1$ and satisfy
\begin{align*}
    \nu_{\lambda,s}(\epsilon)
    =
    1-\epsilon\lambda+O(\epsilon^2),
    \qquad
    s=1,\ldots,m_\lambda,
\end{align*}
as $\epsilon\to0$.

Moreover, there exist orthonormal eigenvectors $\tilde{f}_{\lambda, s} \in \mathrm{Ker}(L_{\mathrm{sym}}-\lambda I), (s=1,...,m_{\lambda})$ such that the corresponding eigenvectors
$g_{\lambda,s}^{(\epsilon)}$ of $W(\epsilon)$ can be chosen to satisfy
\begin{align*}
    g_{\lambda,s}^{(\epsilon)}
    =
    K_\pi R^{1/2}\tilde f_{\lambda,s}
    +O(\epsilon),
    \qquad
    s=1,\ldots,m_\lambda.
\end{align*}
\end{lemma}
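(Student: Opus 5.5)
The plan is to combine the reduction process described before Proposition~\ref{prop: laplacian} with Lemma~\ref{lem: e-val_sym}. We first identify $\mathrm{Spec}(\hat{W}^{(1)}|_{\mathrm{Ran}(\Pi)})$ completely, and then read off the eigenvalue and eigenvector expansions.

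The first step is to show that $\hat{W}^{(1)}|_{\mathrm{Ran}(\Pi)}$ is semi-simple. It should be unitarily equivalent to $-L_{\mathrm{sym}}$, and its eigenvalues are exactly $\{-\lambda:\lambda\in\mathrm{Spec}(L_{\mathrm{sym}})\}$ with the same multiplicities. Since $L_{\mathrm{sym}}$ is real symmetric, choose an orthonormal eigenbasis $\{\tilde f_{\lambda,s}\}$ of $\mathbb{C}^{\tilde X}$. By Lemma~\ref{lem: e-val_sym}, each $K_\pi R^{1/2}\tilde f_{\lambda,s}$ lies in $\mathrm{Ran}(\Pi)$ and is an eigenvector of $\hat W^{(1)}$ with eigenvalue $-\lambda$.

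These $n$ vectors are linearly independent, because $K_\pi$ is injective ($M_\pi^*K_\pi=I$) and $R^{1/2}$ is invertible. Since $\dim\mathrm{Ran}(\Pi)=n$, they form an eigenbasis of $\mathrm{Ran}(\Pi)$. Hence $\hat W^{(1)}|_{\mathrm{Ran}(\Pi)}$ is diagonalizable, hence semi-simple. Its eigenvalue $\alpha=-\lambda$ has multiplicity exactly $m_\lambda$, and the eigenprojection $P_{-\lambda}$ has range $\mathrm{span}\{K_\pi R^{1/2}\tilde f_{\lambda,s}\}_s$.

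Second, we apply the perturbation statement preceding Proposition~\ref{prop: laplacian} with $\alpha=-\lambda$. This yields $m_\lambda$ eigenvalues $\nu_{\lambda,s}(\epsilon)=1-\epsilon\lambda+O(\epsilon^2)$ of $W(\epsilon)$, counted with multiplicity. It also yields a total eigenprojection $P^{(\epsilon)}_{-\lambda}=P_{-\lambda}+O(\epsilon)$ that is analytic in $\epsilon$.

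The main obstacle is choosing individual eigenvectors $g^{(\epsilon)}_{\lambda,s}$, not just the total projection. This is delicate when $\lambda$ is degenerate and the second-order reduction splits the cluster further: the limiting eigenvectors are then determined by the higher-order reduced operators. We handle this by using the paper's remark that $W(\epsilon)$ is similar to a self-adjoint matrix. Consequently, by Rellich/Kato analytic perturbation theory, the eigenvalues and eigenprojections within the $\lambda$-group admit analytic branches. We pick analytic eigenvectors $g^{(\epsilon)}_{\lambda,s}$ of $W(\epsilon)$ in $\mathrm{Ran}(P^{(\epsilon)}_{-\lambda})$, one for each branch $\nu_{\lambda,s}$. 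Their limits $g^{(0)}_{\lambda,s}$ lie in $\mathrm{Ran}(P_{-\lambda})$ and are linearly independent.

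Each limit can be written as $g^{(0)}_{\lambda,s}=K_\pi R^{1/2}\tilde f_{\lambda,s}$ with $\tilde f_{\lambda,s}:=R^{-1/2}M_\pi^* g^{(0)}_{\lambda,s}\in\mathrm{Ker}(L_{\mathrm{sym}}-\lambda I)$, since $\Pi=K_\pi M_\pi^*$ acts as the identity on $\mathrm{Ran}(\Pi)$. Orthonormality of the $\tilde f_{\lambda,s}$ comes from the self-adjoint structure. Conjugating $W(\epsilon)$ into a self-adjoint operator, the analytic eigenvectors for different branches can be taken orthonormal in the corresponding weighted inner product, at $\epsilon=0$ as well.

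At $\epsilon=0$ this weighted inner product is $\langle f,g\rangle_{m}=\sum_x m_\epsilon(x)\overline{f(x)}g(x)$ at $\epsilon=0$, i.e. weight $\deg(x)$. Restricted to $K_\pi R^{1/2}\mathbb{C}^{\tilde X}$, it gives $\langle K_\pi R^{1/2}\tilde f,K_\pi R^{1/2}\tilde g\rangle=\sum_V |A_V|\cdot|A_V|^{-1}\overline{\tilde f(V)}\tilde g(V)=\langle\tilde f,\tilde g\rangle$. Therefore orthonormality of the limits transfers exactly to orthonormality of the $\tilde f_{\lambda,s}$ in $\mathbb{C}^{\tilde X}$. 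Any residual freedom within a branch's eigenspace is fixed by a unitary rotation, which is absorbed into the choice of $\tilde f_{\lambda,s}$. Analyticity then gives $g^{(\epsilon)}_{\lambda,s}=K_\pi R^{1/2}\tilde f_{\lambda,s}+O(\epsilon)$.
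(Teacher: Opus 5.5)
Your proposal is correct and follows essentially the same route as the paper, which obtains the lemma by combining three ingredients: the Kato reduction process, with analyticity of the eigenprojections justified by the similarity of $W(\epsilon)$ to the self-adjoint $W_{\mathrm{sym}}(\epsilon)$; Proposition~\ref{prop: laplacian}; and Lemma~\ref{lem: e-val_sym}. Your version is more complete than the paper's sketch in two places the paper asserts without argument: the dimension count showing that $\{K_\pi R^{1/2}\tilde f_{\lambda,s}\}$ is an eigenbasis of $\mathrm{Ran}(\Pi)$, so that $-\lambda$ has multiplicity exactly $m_\lambda$; and the observation that $K_\pi R^{1/2}$ is an isometry from $\mathbb{C}^{\tilde X}$ onto $\mathrm{Ran}(\Pi)$ with the $\deg$-weighted inner product (since $\sum_{x\in V}\deg(x)=|A_V|$), which is what makes the $\tilde f_{\lambda,s}$ orthonormal.
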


\subsection{Applying the spectral mapping theorem}

In this subsection, we consider the eigenvalues and eigenvectors of $U(\epsilon)$ by using the spectral mapping theorem for quantum walks \cite{higuchi2014}.
We define the diagonal operator $D_{\epsilon}^{1/2}: \mathbb{C}^{X}\rightarrow\mathbb{C}^{X}$ by
\begin{align*}
(D_{\epsilon}^{1/2}f)(x)
=
\sqrt{m_{\epsilon}(x)}f(x)
\end{align*}
for any $f\in\mathbb{C}^{X}$ and $x\in X$.
Using this operator, we define the symmetrized transition matrix associated with $W(\epsilon)$ by
\begin{align}\label{eq:similar}
W_{\mathrm{sym}}(\epsilon)
=
D_{\epsilon}^{1/2}W(\epsilon)D_{\epsilon}^{-1/2}.
\end{align}
Since $W_{\mathrm{sym}}(\epsilon)$ and $W(\epsilon)$ are similar, they have the same spectrum.

Let \(\lambda\in\operatorname{Spec}(L_{\mathrm{sym}})\) have multiplicity \(m_\lambda\). We denote by \(\nu_{\lambda,s}(\epsilon)\), \(s=1,\ldots,m_\lambda\), the eigenvalues of \(W(\epsilon)\) corresponding to \(\lambda\) as described in Lemma~\ref{lem: eigenvalue_expansion}, and by \(g_{\lambda,s}^{(\epsilon)}\) their corresponding eigenvectors.
Then, $\nu_{\lambda,s}(\epsilon)$ is also an eigenvalue of $W_{\mathrm{sym}}(\epsilon)$.
The corresponding eigenvector is given by
\begin{align}
\label{eq: eigenvector_wsym}
f_{\lambda,s}^{(\epsilon)}
=
D_{\epsilon}^{1/2}g_{\lambda,s}^{(\epsilon)}.
\end{align}
After normalization, we use the same notation $f_{\lambda,s}^{(\epsilon)}$ for the normalized eigenvector.
Then, we apply the spectral mapping theorem of quantum walks for our setting.
\begin{proposition}[\cite{higuchi2014}]
\label{prop: SMT}
Let $\lambda\in\mathrm{Spec}(L_{\mathrm{sym}})$, and let
$f_{\lambda,s}^{(\epsilon)}
\in
\mathrm{Ker}
\left(
W_{\mathrm{sym}}(\epsilon)
-
\nu_{\lambda,s}(\epsilon)I
\right), s=1,...,m_{\lambda}$
be normalized eigenvectors.
We set
\begin{align*}
\cos\theta_{\lambda,s}(\epsilon)
=
\nu_{\lambda,s}(\epsilon),
\qquad
\theta_{\lambda,s}(\epsilon)\in[0,\pi].
\end{align*}
For $\lambda\neq0$, the corresponding eigenvectors
$\eta_{\sigma,\lambda,s}^{(\epsilon)}
\in
\mathrm{Ker}
\left(
U(\epsilon)
-
e^{\sigma i\theta_{\lambda,s}(\epsilon)}I
\right)$,
$\sigma\in\{\pm1\}$, are given by
\begin{align*}
\eta_{\sigma,\lambda,s}^{(\epsilon)}
=
\frac{1}{\sqrt{2}|\sin\theta_{\lambda,s}(\epsilon)|}
\left(
I-e^{\sigma i\theta_{\lambda,s}(\epsilon)}S
\right)
d_{\epsilon}^{*}f_{\lambda,s}^{(\epsilon)}.
\end{align*}
For $\lambda=0$, the corresponding eigenvector
$\eta_{1}^{(\epsilon)}
\in
\mathrm{Ker}(U(\epsilon)-I)$
is given by
\begin{align*}
\eta_{1}^{(\epsilon)}
=
d_{\epsilon}^{*}f_{0}^{(\epsilon)}.
\end{align*}
More explicitly, for each $a\in A$, it follows that
\begin{align*}
\eta_{\sigma,\lambda,s}^{(\epsilon)}(a)
&=
\frac{1}{\sqrt{2}|\sin\theta_{\lambda,s}(\epsilon)|}
\left(
\sqrt{p_{\epsilon}(a^{-1})}
f_{\lambda,s}^{(\epsilon)}(t(a))
-
e^{\sigma i\theta_{\lambda,s}(\epsilon)}
\sqrt{p_{\epsilon}(a)}
f_{\lambda,s}^{(\epsilon)}(o(a))
\right), \\
\eta_{1}^{(\epsilon)}(a)
&=
\sqrt{p_{\epsilon}(a^{-1})}
f_{0}^{(\epsilon)}(t(a)).
\end{align*}

\end{proposition}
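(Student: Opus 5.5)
The plan is to verify the statement directly, rather than quoting the general spectral mapping theorem of \cite{higuchi2014}. Throughout, $f=f_{\lambda,s}^{(\epsilon)}$, $\nu=\nu_{\lambda,s}(\epsilon)$ and $\theta=\theta_{\lambda,s}(\epsilon)$. The argument has three ingredients.

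\textbf{Step 1: the discriminant identity.} I first show that
\begin{align*}
d_\epsilon S d_\epsilon^{*}=W_{\mathrm{sym}}(\epsilon).
\end{align*}
From the definitions,
\begin{align*}
(d_\epsilon S d_\epsilon^{*}f)(x)=\sum_{a:t(a)=x}\sqrt{p_\epsilon(a^{-1})\,p_\epsilon(a)}\,f(o(a)).
\end{align*}
On the other hand, for an arc $a$ from $y$ to $x$, the $(x,y)$-entry of $D_\epsilon^{1/2}W(\epsilon)D_\epsilon^{-1/2}$ is
\begin{align*}
\sqrt{m_\epsilon(x)}\,p_\epsilon(a^{-1})/\sqrt{m_\epsilon(y)}=w_\epsilon(a)/\sqrt{m_\epsilon(x)m_\epsilon(y)}=\sqrt{p_\epsilon(a)p_\epsilon(a^{-1})}.
\end{align*}
Here I use $w_\epsilon(a)=w_\epsilon(a^{-1})$. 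Since $G$ is simple, each pair $(x,y)$ carries at most one arc, so the two operators agree entrywise.

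\textbf{Step 2: $U(\epsilon)$ preserves $\mathrm{span}\{d_\epsilon^{*}f,\,Sd_\epsilon^{*}f\}$.} This uses $d_\epsilon d_\epsilon^{*}=I$, $S^2=I$ and Step 1. One gets
\begin{align*}
U(\epsilon)d_\epsilon^{*}f=S(2d_\epsilon^{*}-d_\epsilon^{*})f=Sd_\epsilon^{*}f,
\end{align*}
and
\begin{align*}
U(\epsilon)Sd_\epsilon^{*}f=2Sd_\epsilon^{*}W_{\mathrm{sym}}(\epsilon)f-d_\epsilon^{*}f=2\nu\,Sd_\epsilon^{*}f-d_\epsilon^{*}f.
\end{align*}
Apply these to $\eta=(I-e^{\sigma i\theta}S)d_\epsilon^{*}f$ and use $1-2\cos\theta\,e^{\sigma i\theta}=-e^{2\sigma i\theta}$. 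This gives $U(\epsilon)\eta=e^{\sigma i\theta}\eta$.

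\textbf{Step 3: normalization.} Both $d_\epsilon^{*}f$ and $Sd_\epsilon^{*}f$ are unit vectors, since $d_\epsilon^{*}$ is an isometry and $S$ is unitary. Their inner product is
\begin{align*}
\langle d_\epsilon^{*}f,Sd_\epsilon^{*}f\rangle=\langle f,W_{\mathrm{sym}}(\epsilon)f\rangle=\cos\theta,
\end{align*}
which is real because $W_{\mathrm{sym}}(\epsilon)$ is symmetric. Hence
\begin{align*}
\|\eta\|^2=2-2\cos^2\theta=2\sin^2\theta,
\end{align*}
which yields the prefactor $1/(\sqrt2|\sin\theta|)$. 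The explicit componentwise formula then follows by substituting $(d_\epsilon^{*}f)(a)=\sqrt{p_\epsilon(a^{-1})}f(t(a))$ and $(Sd_\epsilon^{*}f)(a)=\sqrt{p_\epsilon(a)}f(o(a))$.

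\textbf{Main obstacle: $\sin\theta\neq0$ and the case $\lambda=0$.} For the normalization to make sense I need $\sin\theta\neq0$.
\begin{itemize}
\item For $\lambda\neq0$, Lemma~\ref{lem: eigenvalue_expansion} gives $\nu=1-\epsilon\lambda+O(\epsilon^2)$. Since $\lambda>0$ (because $L_{\mathrm{sym}}\ge0$), this lies strictly inside $(-1,1)$ for small $\epsilon$. Thus $\theta\in(0,\pi)$, and the two eigenvalues $e^{\pm i\theta}$ are distinct.
\item For $\lambda=0$, connectivity of $G$ (hence of $\tilde G$) makes $\lambda=0$ simple. The corresponding eigenvalue of $W(\epsilon)$ is exactly the Perron eigenvalue $1$, with $f_0^{(\epsilon)}\propto\sqrt{m_\epsilon}$.
\end{itemize}
In the $\lambda=0$ case, $\sqrt{p_\epsilon(a^{-1})}\sqrt{m_\epsilon(t(a))}=\sqrt{w_\epsilon(a)}=\sqrt{p_\epsilon(a)}\sqrt{m_\epsilon(o(a))}$. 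Hence $Sd_\epsilon^{*}f_0^{(\epsilon)}=d_\epsilon^{*}f_0^{(\epsilon)}$, and Step 2 gives $U(\epsilon)d_\epsilon^{*}f_0^{(\epsilon)}=d_\epsilon^{*}f_0^{(\epsilon)}$ directly. This eigenvector is already normalized by Step 3.
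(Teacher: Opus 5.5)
Your proof is correct. The paper gives no argument of its own: it imports this proposition from the general spectral mapping theorem of \cite{higuchi2014}. You instead verify it from scratch, using three ingredients.
\begin{itemize}
\item The discriminant identity $d_\epsilon S d_\epsilon^{*}=W_{\mathrm{sym}}(\epsilon)$. It actually matches term by term over arcs, so simplicity of $G$ is not needed.
\item The two-dimensional invariant subspace spanned by $d_\epsilon^{*}f$ and $Sd_\epsilon^{*}f$.
\item The norm computation $\|(I-e^{\sigma i\theta}S)d_\epsilon^{*}f\|^{2}=2\sin^{2}\theta$.
\end{itemize}
This is precisely the mechanism behind the cited theorem, specialized to this walk.

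Your route makes explicit two points the paper leaves tacit. First, $\sin\theta_{\lambda,s}(\epsilon)\neq0$ for $\lambda\neq0$, which follows from $L_{\mathrm{sym}}\geq0$ and Lemma~\ref{lem: eigenvalue_expansion}. Second, the unique $\lambda=0$ branch is exactly $\nu=1$, not merely $1+O(\epsilon^{2})$. That is what justifies writing $\eta_1^{(\epsilon)}\in\mathrm{Ker}(U(\epsilon)-I)$ in the first place.

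The citation, in turn, supplies more than the proposition states. It gives that these vectors span the inherited subspace $\mathrm{Ran}(d_\epsilon^{*})+\mathrm{Ran}(Sd_\epsilon^{*})$, and that the orthogonal complement carries only the eigenvalues $\pm1$. The paper implicitly relies on this when it later expands $\psi_t^{(j)}$ in eigenvectors of $U(\epsilon)$. Your computations would give the orthogonality of the $\eta$'s across different $(\lambda,s)$ with one more line. Every inner product among them reduces to $\langle f_{\lambda,s}^{(\epsilon)},f_{\lambda',s'}^{(\epsilon)}\rangle$ and $\langle f_{\lambda,s}^{(\epsilon)},W_{\mathrm{sym}}(\epsilon)f_{\lambda',s'}^{(\epsilon)}\rangle$.
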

The subspace spanned by all the eigenvectors in the above proposition, $\mathrm{span}\{\eta^{(\epsilon)}_{\sigma,\lambda,s}\;|\; \lambda\in \mathrm{Spec}(L_{sym}), \sigma\in\{\pm\},s=1,\dots,m_\lambda \}$, is equivalent to the invariant subspace of $U(\epsilon)$,  $\mathcal{L}:=\mathrm{Ran}(d^*)+\mathrm{Ran}(Sd^*)$, which is called the inherited eigenspace from $W_{\mathrm{sym}}(\epsilon)$. In this paper, the inherited eigenspace lifted up from the perturbed eigenvalues around $1\in \mathrm{spec}(W_{sym}(\epsilon))$ is further inherited from $L_{sym}$ (see Lemma~\ref{lem: e-val_sym}). Since the birth eigenspace $\mathcal{L}^\perp$ is not used in our analysis, we omit it here and only describe the inherited part.

By Lemma~\ref{lem: eigenvalue_expansion} and Eq.~(\ref{eq: eigenvector_wsym}), the corresponding eigenvector $f_{\lambda,s}^{(\epsilon)}\in \mathrm{Ker}(W_{\mathrm{sym}}(\epsilon)-\nu_{\lambda,s}(\epsilon) I)$ can be expanded as
\begin{align}
    \label{eq: eigenvector_by_f}
    f_{\lambda}^{(\epsilon)} = D_{\epsilon}^{1/2}K_{\pi}R^{1/2}\tilde{f}_{\lambda,s} + O(\epsilon),
\end{align}
where $\tilde{f}_{\lambda,s} \in \mathrm{Ker}(L_{\mathrm{sym}}-\lambda I)$. 
By substituting this asymptotic expansion into the spectral mapping theorem (Proposition~\ref{prop: SMT}), we obtain the explicit asymptotic representation of the eigenvectors of the unitary operator $U(\epsilon)$ in terms of the Laplacian $L_{\mathrm{sym}}$. 
This procedure establishes the following lemma.
\begin{lemma}
\label{lem: qw_eigenvector}
Let $\lambda\in\operatorname{Spec}(L_{\mathrm{sym}})$ and let
$\tilde f_{\lambda,s}\in\operatorname{Ker}(L_{\mathrm{sym}}-\lambda I)$,
$s=1,\ldots,m_\lambda$, be the normalized eigenvectors chosen in
Lemma~\ref{lem: eigenvalue_expansion}.

For $\lambda\neq0$, the corresponding eigenvectors
$\eta_{\sigma,\lambda,s}^{(\epsilon)}\in\mathrm{Ker}\left(U(\epsilon)-e^{\sigma i\theta_{\lambda,s}(\epsilon)}I\right)$,
$\sigma\in\{\pm1\}$, satisfy
\begin{align*}
\eta_{\sigma,\lambda,s}^{(\epsilon)}(a)
=
\begin{dcases}
-\dfrac{\sigma i}
{\sqrt{2|A_{\pi(t(a))}|}}
\tilde f_{\lambda,s}(\pi(t(a)))
+O(\sqrt{\epsilon}),
& a\notin B,\\[2mm]
\dfrac{1}{2\sqrt{\lambda}}
\left[
(R^{1/2}\tilde f_{\lambda,s})(\pi(t(a)))
-
(R^{1/2}\tilde f_{\lambda,s})(\pi(o(a)))
\right]
+O(\sqrt{\epsilon}),
& a\in B.
\end{dcases}
\end{align*}
for each $a\in A$.

For $\lambda=0$, the corresponding eigenvector
$\eta_1^{(\epsilon)}\in \mathrm{Ker}(U(\epsilon)-I)$ satisfies
\begin{align*}
\eta_1^{(\epsilon)}(a)
=
\begin{dcases}
\dfrac{1}{\sqrt{|A_{\pi(t(a))}|}}
\tilde f_0(\pi(t(a)))
+O(\sqrt{\epsilon}),
& a\notin B,\\[2mm]
O(\sqrt{\epsilon}),
& a\in B.
\end{dcases}
\end{align*}
\end{lemma}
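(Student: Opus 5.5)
Our plan is to substitute the asymptotic expansion \eqref{eq: eigenvector_by_f} of the normalized eigenvector $f_{\lambda,s}^{(\epsilon)}$ into the explicit formulas of Proposition~\ref{prop: SMT}. We treat the normalization factor, the angle $\theta_{\lambda,s}(\epsilon)$, and the square roots $\sqrt{p_\epsilon(a)}$ separately.

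\textbf{Normalization and angle.} First we normalize. Since $D_\epsilon^{1/2}=D_0^{1/2}+O(\epsilon)$, we have $(D_0^{1/2}K_\pi R^{1/2}\tilde f)(x)=\sqrt{\deg(x)}\,\tilde f(\pi(x))/\sqrt{|A_{\pi(x)}|}$. Its squared norm is $\sum_V \tilde f(V)^2\sum_{x\in V}\deg(x)/|A_V|=\|\tilde f\|^2=1$. So $f_{\lambda,s}^{(\epsilon)}(x)=\sqrt{\deg(x)/|A_{\pi(x)}|}\,\tilde f_{\lambda,s}(\pi(x))+O(\epsilon)$ with no renormalization at leading order. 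Next, from Lemma~\ref{lem: eigenvalue_expansion}, $\cos\theta=1-\epsilon\lambda+O(\epsilon^2)$. This gives $\theta_{\lambda,s}(\epsilon)=\sqrt{2\lambda\epsilon}+O(\epsilon^{3/2})$, $\sin\theta=\sqrt{2\lambda\epsilon}\,(1+O(\epsilon))$, and $e^{\sigma i\theta}=1+\sigma i\sqrt{2\lambda\epsilon}+O(\epsilon)$. This is where the $O(\sqrt\epsilon)$ error scale enters.

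\textbf{Case $a\notin B$.} Here $p_\epsilon(a)=1/\deg(o(a))+O(\epsilon)$, and $o(a),t(a)$ lie in the same component $V$. The bracket in Proposition~\ref{prop: SMT} becomes
$\frac{1}{\sqrt{|A_V|}}\tilde f(V)\,(1-e^{\sigma i\theta})+O(\epsilon)$,
because $\sqrt{\deg(t(a))}/\sqrt{\deg(t(a))}=1$ and likewise at $o(a)$. Since $1-e^{\sigma i\theta}=-\sigma i\sqrt{2\lambda\epsilon}+O(\epsilon)$, dividing by $\sqrt2\sin\theta=\sqrt2\sqrt{2\lambda\epsilon}(1+O(\epsilon))$ yields $-\sigma i\,\tilde f(V)/\sqrt{2|A_V|}+O(\sqrt\epsilon)$.

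\textbf{Case $a\in B$.} Now $\sqrt{p_\epsilon(a)}=\sqrt{\epsilon/\deg(o(a))}\,(1+O(\epsilon))$, and similarly for $a^{-1}$. The bracket equals $\sqrt\epsilon\,[\tilde f(\pi(t(a)))/\sqrt{|A_{\pi(t(a))}|}-\tilde f(\pi(o(a)))/\sqrt{|A_{\pi(o(a))}|}]+O(\epsilon)$, using $e^{\sigma i\theta}=1+O(\sqrt\epsilon)$. Dividing by $\sqrt2\sin\theta\approx 2\sqrt{\lambda\epsilon}$ gives $\frac{1}{2\sqrt\lambda}[(R^{1/2}\tilde f)(\pi(t(a)))-(R^{1/2}\tilde f)(\pi(o(a)))]+O(\sqrt\epsilon)$.

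\textbf{Case $\lambda=0$.} Here $\eta_1=d_\epsilon^*f_0^{(\epsilon)}$. For $a\notin B$, the same computation gives $\tilde f_0(\pi(t(a)))/\sqrt{|A_{\pi(t(a))}|}+O(\epsilon)$. For $a\in B$, the entry is $O(\sqrt\epsilon)$ because of the factor $\sqrt{p_\epsilon}$.

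\textbf{Main obstacle.} The delicate step is the division by $\sin\theta=\Theta(\sqrt\epsilon)$. An $O(\epsilon)$ error in the numerator becomes only $O(\sqrt\epsilon)$ after division, which is acceptable. But this requires that the leading $O(1)$ parts of $f^{(\epsilon)}(t(a))\sqrt{p(a^{-1})}$ and $f^{(\epsilon)}(o(a))\sqrt{p(a)}$ cancel exactly for $a\notin B$. It also requires that the eigenvector error from Lemma~\ref{lem: eigenvalue_expansion} is genuinely $O(\epsilon)$ and not $O(\sqrt\epsilon)$, so the analyticity of the eigenprojections discussed before Proposition~\ref{prop: laplacian} must be invoked carefully. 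The cancellation itself holds because $\sqrt{p_0(a^{-1})}\sqrt{\deg(t(a))}=1=\sqrt{p_0(a)}\sqrt{\deg(o(a))}$.
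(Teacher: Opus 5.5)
Your proposal is correct and follows essentially the same route as the paper: substitute $f_{\lambda,s}^{(\epsilon)}=D_\epsilon^{1/2}K_\pi R^{1/2}\tilde f_{\lambda,s}+O(\epsilon)$ into the spectral mapping formula, expand $e^{\sigma i\theta}=1+\sigma i\sqrt{2\lambda\epsilon}+O(\epsilon)$ and $\sin\theta=\sqrt{2\lambda\epsilon}+O(\epsilon^{3/2})$, and treat $a\notin B$ and $a\in B$ separately. The differences are cosmetic: the paper keeps $D_\epsilon^{1/2}$ exact so that $\sqrt{p_\epsilon(a^{-1})}\sqrt{m_\epsilon(t(a))}=\sqrt{w_\epsilon(a)}$ cancels identically, whereas you expand around $\epsilon=0$; you also verify that $D_0^{1/2}K_\pi R^{1/2}\tilde f_{\lambda,s}$ already has unit norm, a normalization point the paper leaves implicit.
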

\begin{proof}
Let $\tilde{f}_{\lambda,s} \in \mathrm{Ker}(L_{\mathrm{sym}}-\lambda I)$ be an eigenvector normalized as $\|\tilde{f}_{\lambda,s}\|=1$. 
For $\lambda=0$, by Proposition \ref{prop: SMT} and Eq.~\eqref{eq: eigenvector_by_f}, we see
\begin{align*}
    \eta_{1}^{(\epsilon)}(a)
        &=\sqrt{p_{\epsilon}(a^{-1})}f_{0}^{(\epsilon)}(t(a)) \\
        &=\sqrt{p_{\epsilon}(a^{-1})}((D_{\epsilon}^{1/2}K_{\pi}R^{1/2}\tilde{f}_{0})(t(a))+O(\epsilon)) \\
        &=\sqrt{\frac{w_{\epsilon}(a)}{m_{\epsilon}(t(a))}}\left(\sqrt{m_{\epsilon}(t(a))}(R^{1/2}\tilde{f}_{0})(\pi(t(a)))+O(\epsilon) \right) \\
        &=\sqrt{w_{\epsilon}(a)}(R^{1/2}\tilde{f}_{0})(\pi(t(a)))+O(\epsilon).
\end{align*}
Here, we use the fact that
\begin{align*}
    \sqrt{\frac{w_{\epsilon}(a)}{m_{\epsilon}(t(a))}}\times O(\epsilon)=O(\epsilon).
\end{align*}
for $a \in A$.
Thus, we obtain
\begin{align*}
    \eta_{1}^{(\epsilon)}(a)=
    \begin{dcases}
        (R^{1/2}\tilde{f}_{0})(\pi(t(a)))+O(\epsilon) &: a \notin B, \\
        O(\sqrt{\epsilon}) &: a \in B.
    \end{dcases}
\end{align*}
Next we consider $\lambda \neq 0$ case.
\begin{align*}
    &\eta_{\sigma, \lambda,s}^{(\epsilon)}(a)\\
    &= \frac{1}{\sqrt{2}|\sin \theta_{\lambda,s}(\epsilon)|}\left(\sqrt{p_{\epsilon}(a^{-1})} f_{\lambda,s}^{(\epsilon)}(t(a))-e^{\sigma i \theta_{\lambda,s}(\epsilon)} \sqrt{p_{\epsilon}(a)} f_{\lambda,s}^{(\epsilon)}(o(a)) \right) \\
    &= \frac{1}{\sqrt{2}|\sin \theta_{\lambda,s}(\epsilon)|}\left(\sqrt{\frac{w_{\epsilon}(a)}{m_{\epsilon}(t(a))}}\sqrt{m_{\epsilon}(t(a))}(R^{1/2}\tilde{f}_{\lambda,s})(\pi(t(a))) \right.\\
    & \qquad \qquad \left. -e^{\sigma i \theta_{\lambda,s}(\epsilon)} \sqrt{\frac{w_{\epsilon}(a)}{m_{\epsilon}(o(a))}}\sqrt{m_{\epsilon}(o(a))}(R^{1/2}\tilde{f}_{\lambda,s})(\pi(o(a))) +O(\epsilon) \right)  \\
    &= \frac{\sqrt{w_{\epsilon}(a)}}{\sqrt{2}|\sin \theta_{\lambda,s}(\epsilon)|}\left((R^{1/2}\tilde{f}_{\lambda,s})(\pi(t(a)))-e^{\sigma i \theta_{\lambda,s}(\epsilon)}(R^{1/2}\tilde{f}_{\lambda,s})(\pi(o(a)))+O(\epsilon) \right).
\end{align*}
Here, from $\cos \theta_{\lambda,s}(\epsilon)=1-\epsilon \lambda+O(\epsilon^{2})$, we have
\begin{align*}
    e^{\sigma i \theta_{\lambda,s}(\epsilon)}&= 1+\sigma i \theta_{\lambda,s}(\epsilon) +O(\theta_{\lambda}(\epsilon)^{2})\\
    &= 1+\sigma i \sqrt{2 \lambda \epsilon} +O(\epsilon),\\
    \sin \theta_{\lambda,s}(\epsilon)&=\sqrt{2\epsilon\lambda}+O(\epsilon^{3/2}).
\end{align*}
Remark that for $a \notin B$, $\pi(t(a))=\pi(o(a))$ and $w_{\epsilon}(a)=1$  hold. 
Then, for $a \notin B$, we obtain
\begin{align*}
    &\eta_{\sigma, \lambda,s}^{(\epsilon)}(a)\\
    &= \frac{\sqrt{w_{\epsilon}(a)}}{\sqrt{2}|\sin \theta_{\lambda,s}(\epsilon)|}\left((R^{1/2}\tilde{f}_{\lambda,s})(\pi(t(a)))-e^{\sigma i \theta_{\lambda,s}(\epsilon)}(R^{1/2}\tilde{f}_{\lambda,s})(\pi(o(a)))+O(\epsilon) \right) \\
    &=\frac{1}{2\sqrt{\epsilon\lambda}+O(\epsilon^{3/2})}\left((R^{1/2}\tilde{f}_{\lambda,s})(\pi(t(a)))-(1+\sigma i \sqrt{2 \lambda \epsilon} +O(\epsilon))(R^{1/2}\tilde{f}_{\lambda,s})(\pi(t(a)))+O(\epsilon) \right) \\
    &=-\frac{\sigma i }{\sqrt{2}}(R^{1/2}\tilde{f}_{\lambda,s})(\pi(t(a)))+O(\sqrt{\epsilon}) \\
    &=-\frac{\sigma i }{\sqrt{2|A_{\pi(t(a))}|}}\tilde{f}_{\lambda,s}(\pi(t(a)))+O(\sqrt{\epsilon})
\end{align*}
For $a \in B$, recalling $w_{\epsilon}(a)=\epsilon$, we obtain
\begin{align*}
    &\eta_{\sigma, \lambda,s}^{(\epsilon)}(a)\\
    &= \frac{\sqrt{w_{\epsilon}(a)}}{\sqrt{2}|\sin \theta_{\lambda,s}(\epsilon)|}\left((R^{1/2}\tilde{f}_{\lambda,s})(\pi(t(a)))-e^{\sigma i \theta_{\lambda,s}(\epsilon)}(R^{1/2}\tilde{f}_{\lambda,s})(\pi(o(a)))+O(\epsilon) \right) \\
    &=\frac{\sqrt{\epsilon}}{2\sqrt{\epsilon\lambda}+O(\epsilon^{3/2})}\left((R^{1/2}\tilde{f}_{\lambda,s})(\pi(t(a)))-(1+O(\sqrt{\epsilon}))(R^{1/2}\tilde{f}_{\lambda,s})(\pi(o(a)))+O(\epsilon) \right) \\
    &=\frac{1}{2\sqrt{\lambda}}\left((R^{1/2}\tilde{f}_{\lambda,s})(\pi(t(a)))-(R^{1/2}\tilde{f}_{\lambda,s})(\pi(o(a))) \right) +O(\sqrt{\epsilon}) \\
\end{align*}
\end{proof}

\subsection{Proof of Theorem \ref{thm:prob_matrix}}
\begin{proof}
In the preceding discussions, we have investigated the specific eigenvalues and eigenvectors that emerge from the splitting of the eigenvalue $1$ with respect to $\epsilon$. 
As the following lemma shows, it suffices to consider only these eigenvalues and eigenvectors, since they account for almost the overlap with the initial state.
\begin{lemma}
\label{lem: total_overlap}
Let $\psi_{0}^{(j)}$ be the initial state defined in Eq.~(\ref{eq: initial}).
For each $\lambda \in \mathrm{Spec}(L_{\mathrm{sym}})\setminus\{0\}$,
let $m_{\lambda}$ denote its multiplicity.
Let $\eta_{1}^{(\epsilon)} \in \mathrm{Ker}(U(\epsilon)-I)$ and
\begin{align*}
\eta_{\sigma,\lambda,s}^{(\epsilon)}
\in
\mathrm{Ker}
\left(
U(\epsilon)
-
e^{\sigma i\theta_{\lambda,s}(\epsilon)}I
\right),
\qquad
\sigma\in\{\pm1\},
\quad
s=1,\ldots,m_{\lambda},
\end{align*}
be the corresponding normalized eigenvectors of $U(\epsilon)$.
Then, we have
\begin{align*}
&
\left|
\langle
\eta_{1}^{(\epsilon)},
\psi_{0}^{(j)}
\rangle
\right|^{2}
+
\sum_{\lambda\in\mathrm{Spec}(L_{\mathrm{sym}})\setminus\{0\}}
\sum_{s=1}^{m_{\lambda}}
\sum_{\sigma\in\{\pm1\}}
\left|
\langle
\eta_{\sigma,\lambda,s}^{(\epsilon)},
\psi_{0}^{(j)}
\rangle
\right|^{2}
=
1+O(\sqrt{\epsilon}).
\end{align*}
\end{lemma}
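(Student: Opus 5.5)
Since Lemma~\ref{lem: qw_eigenvector} already gives the leading-order values of all the relevant eigenvectors of $U(\epsilon)$, the plan is to compute each inner product with $\psi_{0}^{(j)}$ directly and then sum, using orthonormality of the reduced eigenvectors $\tilde f_{\lambda,s}$ in $\mathbb{C}^{\tilde X}$. The initial state is supported on $A_j\subset A\setminus B$, so only the case $a\notin B$ of Lemma~\ref{lem: qw_eigenvector} is needed, and there $\pi(t(a))=X_j$ for every $a\in A_j$.

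\textbf{Step 1 (the $\lambda=0$ overlap).} From Lemma~\ref{lem: qw_eigenvector},
\begin{align*}
\langle \eta_1^{(\epsilon)},\psi_0^{(j)}\rangle
=\sum_{a\in A_j}\frac{1}{\sqrt{|A_j|}}\,\overline{\frac{\tilde f_0(X_j)}{\sqrt{|A_j|}}}+O(\sqrt{\epsilon})
=\overline{\tilde f_0(X_j)}+O(\sqrt{\epsilon}),
\end{align*}
where the error stays $O(\sqrt\epsilon)$ because $|A_j|$ is fixed. Before using this, I would check that $\eta_1^{(\epsilon)}$ is normalized to leading order. Its squared norm is
\begin{align*}
\sum_V |A_V|\cdot\frac{|\tilde f_0(V)|^2}{|A_V|}+O(\sqrt\epsilon)=\|\tilde f_0\|^2+O(\sqrt{\epsilon})=1+O(\sqrt\epsilon).
\end{align*}
The weak arcs contribute only $O(\epsilon)$ in total, since there are finitely many of them. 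If $\tilde G$ is connected, $\tilde f_0$ is proportional to $(\sqrt{|A_V|})_V$. If not, I would treat a possibly multidimensional kernel with an index $s$ exactly as in the $\lambda\neq0$ case.

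\textbf{Step 2 (the $\lambda\ne0$ overlaps).} In the same way,
\begin{align*}
\langle \eta_{\sigma,\lambda,s}^{(\epsilon)},\psi_0^{(j)}\rangle=\frac{\sigma i}{\sqrt2}\,\overline{\tilde f_{\lambda,s}(X_j)}+O(\sqrt\epsilon),
\end{align*}
so the sum over $\sigma=\pm1$ gives $|\tilde f_{\lambda,s}(X_j)|^2+O(\sqrt\epsilon)$. I would also confirm that the vectors from Proposition~\ref{prop: SMT} are asymptotically unit vectors. The internal arcs contribute $\sum_V |A_V|\cdot\frac{|\tilde f(V)|^2}{2|A_V|}=\frac12$. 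The weak arcs contribute
\begin{align*}
\frac{1}{4\lambda}\sum_{a\in B}\bigl|(R^{1/2}\tilde f)(\pi(t(a)))-(R^{1/2}\tilde f)(\pi(o(a)))\bigr|^2=\frac{1}{4\lambda}\cdot 2\langle R^{1/2}\tilde f,LR^{1/2}\tilde f\rangle=\frac{1}{2\lambda}\langle\tilde f,L_{\mathrm{sym}}\tilde f\rangle=\frac12 .
\end{align*}
This gives total norm $1$, consistent with the normalization in Proposition~\ref{prop: SMT}. If the normalized vectors differ from those of Lemma~\ref{lem: qw_eigenvector} by a factor $1+O(\sqrt\epsilon)$, the error is absorbed.

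\textbf{Step 3 (summation).} Adding Steps 1 and 2 over all $\lambda$ and $s$ gives
\begin{align*}
\sum_{\lambda,s}|\tilde f_{\lambda,s}(X_j)|^2+O(\sqrt\epsilon).
\end{align*}
The $\tilde f_{\lambda,s}$, including the $\lambda=0$ vectors, form an orthonormal basis of $\mathbb{C}^{\tilde X}$ because $L_{\mathrm{sym}}$ is real symmetric. By Parseval applied to $\delta_{X_j}$ this sum equals $\|\delta_{X_j}\|^2=1$. Finiteness of the spectrum means the finitely many $O(\sqrt\epsilon)$ errors add up to $O(\sqrt\epsilon)$.

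\textbf{Main obstacle.} The delicate point is uniformity and well-definedness of the eigenvector choice. Where $\lambda$ has multiplicity $m_\lambda>1$, the perturbed eigenvalues $\nu_{\lambda,s}(\epsilon)$ may split further at order $\epsilon^2$. The limiting vectors $\tilde f_{\lambda,s}$ are then fixed by higher-order reduction rather than chosen freely, but they are still orthonormal by Lemma~\ref{lem: eigenvalue_expansion}. Parseval only needs some orthonormal basis of each eigenspace, so this is harmless.

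One must also ensure $\theta_{\lambda,s}(\epsilon)\neq0$ for $\lambda\neq0$, so that the division by $\sin\theta$ is legitimate. This follows from $\theta\sim\sqrt{2\lambda\epsilon}$ for $\lambda>0$, using that $L_{\mathrm{sym}}$ is positive semidefinite.

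Finally, the $\eta$'s are eigenvectors of a unitary operator for distinct eigenvalues, so they are mutually orthogonal. Bessel's inequality therefore bounds the left side by $1$, which gives an independent sanity check on the upper side of the estimate.
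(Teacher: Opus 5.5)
Your proposal is correct and follows essentially the same route as the paper: evaluate each overlap on $A_j$ using the $a\notin B$ asymptotics of Lemma~\ref{lem: qw_eigenvector}, observe that the two terms $\sigma=\pm1$ each contribute $\frac12|\tilde f_{\lambda,s}(X_j)|^2$, and finish with Parseval for the orthonormal eigenbasis of $L_{\mathrm{sym}}$. Your normalization checks and multiplicity remarks are consistent extras. Also, since $G$ is connected, $\tilde G$ is connected, so $\lambda=0$ is automatically a simple eigenvalue and your fallback case for a multidimensional kernel never arises.
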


\begin{proof}
    By direct calculation, we get 
    \begin{align*}
        \langle \eta_{1}^{(\epsilon)}, \psi_0^{(j)} \rangle
        &= \sum_{a \in A_{j}} \overline{\left(|A_{\pi(t(a))}|^{-1/2}\tilde{f}_{0}(\pi(t(a)))+O(\sqrt{\epsilon})\right)}\times |A_{j}|^{-1/2} \\
        &= \tilde{f}_{0}(V_{j})+O(\sqrt{\epsilon}).
    \end{align*}
    \begin{align*}
        \langle \eta_{\sigma, \lambda,s}^{(\epsilon)}, \psi_0^{(j)} \rangle
        &=\sum_{a \in A_{j}} \overline{\left(-\sigma i (2|A_{\pi(t(a))}|)^{-1/2}\tilde{f}_{\lambda,s}(\pi(t(a)))+O(\sqrt{\epsilon})\right)}\times |A_{j}|^{-1/2} \\
        &=\frac{\sigma i}{\sqrt{2}}\tilde{f}_{\lambda,s}(\pi(t(a)))+O(\sqrt{\epsilon}).
    \end{align*}
    Therefore, using $\sum_{\lambda }\sum_{s} \tilde{f}_{\lambda,s}(V_j)^{2} = 1$, the total sum is given by
    \begin{align*}
        &\left|
\langle
\eta_{1}^{(\epsilon)},
\psi_{0}^{(j)}
\rangle
\right|^{2}
+
\sum_{\lambda\in\mathrm{Spec}(L_{\mathrm{sym}})\setminus\{0\}}
\sum_{s=1}^{m_{\lambda}}
\sum_{\sigma\in\{\pm1\}}
\left|
\langle
\eta_{\sigma,\lambda,s}^{(\epsilon)},
\psi_{0}^{(j)}
\rangle
\right|^{2} \\
        &= \tilde{f}_{0}(V_{j})^{2} + \sum_{\lambda \in \mathrm{Spec}(L_{\mathrm{sym}}) \setminus \{0\}}\sum_{s=1}^{m_{\lambda}} \tilde{f}_{\lambda,s}(V_j)^{2} + O(\sqrt{\epsilon}) \\
        &= \sum_{\lambda ,s} \tilde{f}_{\lambda,s}(V_j)^{2} + O(\sqrt{\epsilon}) \\
        &= 1 + O(\sqrt{\epsilon}).
    \end{align*}
    Hence, we get the desired result.
\end{proof}

From Lemma \ref{lem: total_overlap}, the probability $\mu_{t}^{(\epsilon)}(j, k)$ is given by
\begin{align*}
    &\mu_{t}^{(\epsilon)}(j, k) \\
    &= \sum_{a \in A_{k}}\left|\psi_{t}^{(j)}(a)\right|^{2} \\
    &= \sum_{a \in A_{k}}\left|\sum_{{\nu \in \mathrm{Spec}(U(\epsilon))}}\nu^{t} \eta_{\nu}(a)\braket{\eta_{\nu}, \psi_{0}^{(j)}} \right|^{2} \\
    &= \sum_{a \in A_{k}}\left| \eta_{1}^{(\epsilon)}(a) \langle \eta_{1}^{(\epsilon)}, \psi_0^{(j)} \rangle + \sum_{\lambda \in \mathrm{Spec}(L_{\mathrm{sym}}) \setminus \{0\}} \sum_{\sigma \in \{\pm1\}} e^{\sigma i t \theta_{\lambda,s}(\epsilon)} \eta_{\sigma, \lambda,s}^{(\epsilon)}(a) \langle \eta_{\sigma, \lambda,s}^{(\epsilon)}, \psi_0^{(j)} \rangle \right|^{2} + O(\sqrt{\epsilon}), \\
\end{align*}
where $\eta_{\nu}$ is the eigenvector associated with the eigenvalue 
$\nu$. Here, for $a\in A_k$, since $a\notin B$, we have $\pi(t(a))=V_k$ and $A_{\pi(t(a))}=A_k$. Thus, we see
\begin{align*}
    \eta_{1}^{(\epsilon)}(a)\langle \eta_{1}^{(\epsilon)}, \psi_0^{(j)} \rangle 
    &=\left(\frac{1}{\sqrt{|A_{k}|}}\tilde{f}_{0}(V_{k})+O(\sqrt{\epsilon})\right)(\tilde{f}_{0}(V_{j})+O(\sqrt{\epsilon})) \\
    &=\frac{1}{\sqrt{|A_{k}|}}\tilde{f}_{0}(V_{j})\tilde{f}_{0}(V_{k})+O(\sqrt{\epsilon})).
\end{align*}
\begin{align*}
&e^{it \theta_{\lambda,s}(\epsilon)}\eta_{+,\lambda,s}^{(\epsilon)}(a)\langle \eta_{+,\lambda,s}^{(\epsilon)},\psi_{0}^{(j)}\rangle+e^{-it \theta_{\lambda,s}(\epsilon)}\eta_{-,\lambda,s}^{(\epsilon)}(a)\langle\eta_{-,\lambda,s}^{(\epsilon)},\psi_{0}^{(j)}\rangle \\
&= e^{it \theta_{\lambda,s}(\epsilon)} \left(-\frac{i}{\sqrt{2|A_{k}|}}\tilde{f}_{\lambda,s}(V_{k})+O(\sqrt{\epsilon})\right) \left(\frac{i}{\sqrt{2}}\tilde{f}_{\lambda,s}(V_{j})+O(\sqrt{\epsilon})\right) \\
&\quad + e^{-it \theta_{\lambda,s}(\epsilon)} \left(\frac{i}{\sqrt{2|A_{k}|}}\tilde{f}_{\lambda,s}(V_{k})+O(\sqrt{\epsilon})\right) \left(-\frac{i}{\sqrt{2}}\tilde{f}_{\lambda,s}(V_{j})+O(\sqrt{\epsilon})\right) \\
&= \frac{1}{2\sqrt{|A_{k}|}}\tilde{f}_{\lambda,s}(V_{j})\tilde{f}_{\lambda,s}(V_{k})(e^{it \theta_{\lambda,s}(\epsilon)}+e^{-it \theta_{\lambda,s}(\epsilon)}) +O(\sqrt{\epsilon})\\
&= \frac{1}{\sqrt{|A_{k}|}}\tilde{f}_{\lambda,s}(V_{j})\tilde{f}_{\lambda,s}(V_{k})\cos(t\theta_{\lambda,s}(\epsilon))+O(\sqrt{\epsilon}).
\end{align*}

Therefore, we obtain
\begin{align*}
&\mu_{t}^{(\epsilon)}(j,k) \\
&= \sum_{a \in A_{k}}\left| \eta_{1}^{(\epsilon)}(a) \langle \eta_{1}^{(\epsilon)}, \psi_0^{(j)} \rangle + \sum_{\lambda \in \mathrm{Spec}(L_{\mathrm{sym}}) \setminus \{0\}} \sum_{s=1}^{m_\lambda} \sum_{\sigma \in \{\pm1\}} e^{\sigma i t \theta_{\lambda,s}(\epsilon)} \eta_{\sigma,\lambda,s}^{(\epsilon)}(a) \langle \eta_{\sigma,\lambda,s}^{(\epsilon)}, \psi_0^{(j)} \rangle \right|^{2} + O(\sqrt{\epsilon}) \\
&= \sum_{a \in A_{k}}\left| \sum_{\lambda \in \mathrm{Spec}(L_{\mathrm{sym}})} \sum_{s=1}^{m_\lambda}\frac{1}{\sqrt{|A_{k}|}}\tilde{f}_{\lambda,s}(V_{j})\tilde{f}_{\lambda,s}(V_{k})\cos (t\theta_{\lambda,s}(\epsilon)) \right|^{2} + O(\sqrt{\epsilon}) \\
&= \sum_{a \in A_{k}} \frac{1}{|A_k|} \left(\sum_{\lambda \in \mathrm{Spec}(L_{\mathrm{sym}})} \sum_{s=1}^{m_\lambda}\tilde{f}_{\lambda,s}(V_j)\tilde{f}_{\lambda,s}(V_k)\cos (t\theta_{\lambda,s}(\epsilon))\right)^{2} + O(\sqrt{\epsilon}) \\
&= \left(\sum_{\lambda \in \mathrm{Spec}(L_{\mathrm{sym}})} \sum_{s=1}^{m_\lambda}\tilde{f}_{\lambda,s}(V_j)\tilde{f}_{\lambda,s}(V_k)\cos (t\theta_{\lambda,s}(\epsilon))\right)^{2} + O(\sqrt{\epsilon}).
\end{align*}
Here, using $\theta_{\lambda,s}(\epsilon)=\sqrt{2\epsilon\lambda}+O(\epsilon^{3/2})$, we see
\begin{align*}
    \cos (t\theta_{\lambda,s}(\epsilon))
    &=\cos(t\sqrt{2\epsilon \lambda})+O(t\epsilon^{3/2}).
\end{align*}
Thus, the probability matrix $\mu_{t}^{(\epsilon)}(j,k)$ is given by
\begin{align}
\label{eq: order_t}
    &\mu_{t}^{(\epsilon)}(j,k) \notag\\
    &= \left(\sum_{{\lambda \in \mathrm{Spec}(L_{\mathrm{sym}})}}\sum_{s=1}^{m_\lambda}\tilde{f}_{\lambda,s}(V_j)\tilde{f}_{\lambda,s}(V_k)\cos\left(t\sqrt{2\epsilon \lambda }\right)\right)^{2} + O(\sqrt{\epsilon}+t\epsilon^{3/2})\notag\\
    &=\left( \left[\cos \left(t \sqrt{2 \epsilon L_{\mathrm{sym}}} \right)\right]_{j,k} \right)^{2} + O(\sqrt{\epsilon}+t\epsilon^{3/2}).
\end{align}
Hence, for fixed $\tau>0$, we get
\begin{align*}
    \mu_{\lfloor \tau/ \sqrt{2\epsilon}\rfloor}^{(\epsilon)}(j,k)
    =\left( \left[\cos \left(\tau \sqrt{  L_{\mathrm{sym}}} \right)\right]_{j,k} \right)^{2} + O(\sqrt{\epsilon}).
\end{align*}
Therefore, we obtain the statement of Theorem \ref{thm:prob_matrix}.
\end{proof}

\section{Summary and discussion}
\label{subsec_summary}
In this paper, we studied a discrete-time quantum walk on a finite weighted graph in which the edges belonging to an edge cut set are assigned a small weight $\epsilon>0$. When $\epsilon=0$, transitions across these weak edges are completely forbidden, and the graph separates into the connected components obtained by removing them. This setting extends our previous two-component model connected by a single weak edge to a general network of weakly connected components \cite{hosaka2026}.

By contracting each connected component into a single vertex and retaining the weak edges between the components, we introduced a reduced graph. We showed that, on the time scale $t=\Theta(\epsilon^{-1/2})$, the component-wise finding probabilities are asymptotically described by a continuous-time wave equation on this reduced graph. The equation is governed by the symmetric weighted Laplacian $L_{\mathrm{sym}}$, which is determined by the numbers of weak edges connecting the components and the numbers of arcs contained in the individual components. Thus, although the original quantum walk evolves on the full arc space, its leading-term transport between the components is determined by an operator on the much smaller vertex space of the reduced graph.

An important consequence is that the leading-term finding probabilities are independent of the detailed internal structures of the connected components. Only the reduced graph, the multiplicities of the weak edges, and the numbers of arcs in the components enter the effective dynamics. We also showed that the resulting wave equation has the same form as Newton's equation of motion for a classical spring-mass system on the reduced graph. Under this correspondence, the number of arcs in each component plays the role of its mass, while the number of weak edges connecting two components plays the role of the spring constant between them.

These results provide a reduced description of quantum-walk transport through weak connections without resolving the detailed internal structures of the individual components. A natural direction for future work is to investigate higher-order asymptotic corrections. While the leading-term dynamics depends only on the reduced graph and the component sizes, the detailed internal structures of the components are expected to appear in higher-order terms. It would also be interesting to clarify how graph connectivity, particularly the structure of edge cuts and the connections between components, influences quantum-walk transport. Such an investigation may lead to conditions under which the walker becomes strongly concentrated in a prescribed component, with possible connections to quantum search and state-transfer problems.

\section*{Acknowledgment}
T. H. acknowledges financial support from the Grant-in-Aid for JSPS Fellows (Grant No. 26KJ1214).
E. S. acknowledges financial support from the Grant-in-Aid of Scientific Research (C) Japan Society for the Promotion of Science (Grant No. 24K06863).
The authors have no competing interests to declare that are relevant to the content of this article. All data generated or analyzed during this study are included in this published article.

\bibliographystyle{abbrv}
\bibliography{ref}

@article{wong2016,
  title      = {Engineering the success of quantum walk search using weighted graphs},
  author     = {Thomas G. Wong and Pascal Philipp},
  year       = {2016},
  doi        = {10.1103/PhysRevA.94.022304},
  url        = {https://doi.org/10.1103/physreva.94.022304},
  issue      = {2},
  journal    = {Physical Review A},
  publisher  = {American Physical Society (APS)},
  volume     = {94}
}

@article{mukai2020,
  title      = {Discrete-time quantum walk on complex networks for community detection},
  author     = {Kanae Mukai and Naomichi Hatano},
  year       = {2020},
  doi        = {10.1103/PhysRevResearch.2.023378},
  url        = {https://doi.org/10.1103/physrevresearch.2.023378},
  issue      = {2},
  journal    = {Physical Review Research},
  publisher  = {American Physical Society (APS)},
  volume     = {2}
}

@article{tsomokos2011,
  title      = {Quantum walks on complex networks with connection instabilities and community structure},
  author     = {Dimitris I. Tsomokos},
  year       = {2011},
  doi        = {10.1103/PhysRevA.83.052315},
  url        = {https://doi.org/10.1103/physreva.83.052315},
  issue      = {5},
  journal    = {Physical Review A},
  publisher  = {American Physical Society (APS)},
  volume     = {83}
}

@article{wong2017,
  title      = {Coined quantum walks on weighted graphs},
  author     = {Thomas G Wong},
  year       = {2017},
  doi        = {10.1088/1751-8121/aa8c17},
  url        = {https://doi.org/10.1088/1751-8121/aa8c17},
  issue      = {47},
  journal    = {Journal of Physics A: Mathematical and Theoretical},
  pages      = {475301},
  publisher  = {IOP Publishing},
  volume     = {50}
}

@article{godsil2019,
  title      = {Discrete-time quantum walks and graph structures},
  author     = {Chris Godsil and Hanmeng Zhan},
  year       = {2019},
  doi        = {10.1016/j.jcta.2019.05.003},
  url        = {https://doi.org/10.1016/j.jcta.2019.05.003},
  journal    = {Journal of Combinatorial Theory, Series A},
  pages      = {181-212},
  publisher  = {Elsevier BV},
  volume     = {167}
}

@book{GodsilHanmengbook,
  title    = {Discrete Quantum Walks on Graphs and Digraphs},
  author     = {C. Godsil and H. Zhan},
  year       = {2023},
  isbn       = {978-1-009-26168-5},
  publisher  = {Cambridge University Press}
}

@article{AdaChanHanmeng,
  title      = {Pretty good state transfer in discrete-time quantum walks},
  author     = {A. Chan and H. Zhan},
  year       = {2023},
  issue      = {16},
  journal    = {Jounal Physics A: Mathematical and Theoretical},
  volume     = {56}
}

@article{grimmett2004,
  title      = {Weak limits for quantum random walks},
  author     = {Geoffrey Grimmett and Svante Janson and Petra F. Scudo},
  year       = {2004},
  doi        = {10.1103/PhysRevE.69.026119},
  url        = {https://doi.org/10.1103/physreve.69.026119},
  issue      = {2},
  journal    = {Physical Review E},
  publisher  = {American Physical Society (APS)},
  volume     = {69}
}

@article{ambainis2005,
  title      = {Coins make quantum walks faster},
  author     = {Andris Ambainis and Julia Kempe and Alexander Rivosh},
  year       = {2005},
  journal    = {Proceedings of the 16th   ACM-SIAM Symposium on Discrete Algorithms},
  pages      = {1099-1108}
}

@article{shenvi2003,
  title      = {Quantum random-walk search algorithm},
  author     = {Neil Shenvi and Julia Kempe and K. Birgitta Whaley},
  year       = {2003},
  doi        = {10.1103/PhysRevA.67.052307},
  url        = {https://doi.org/10.1103/physreva.67.052307},
  issue      = {5},
  journal    = {Physical Review A},
  publisher  = {American Physical Society (APS)},
  volume     = {67}
}

@article{konno2002,
  title      = {Quantum Random Walks in One Dimension},
  author     = {Norio Konno},
  year       = {2002},
  doi        = {10.1023/A:1023413713008},
  url        = {https://doi.org/10.1023/a:1023413713008},
  issue      = {5},
  journal    = {Quantum Information Processing},
  pages      = {345-354},
  publisher  = {Springer Science and Business Media LLC},
  volume     = {1}
}

@article{higuchi2014,
  title      = {Spectral and asymptotic properties of Grover walks on crystal lattices},
  author     = {Yusuke Higuchi and Norio Konno and Iwao Sato and Etsuo Segawa},
  year       = {2014},
  doi        = {10.1016/j.jfa.2014.09.003},
  url        = {https://doi.org/10.1016/j.jfa.2014.09.003},
  issue      = {11},
  journal    = {Journal of Functional Analysis},
  pages      = {4197-4235},
  publisher  = {Elsevier BV},
  volume     = {267}
}

@book{kato1982,
  title      = {A Short Introduction to Perturbation Theory for Linear Operators},
  author     = {Tosio Kato},
  year       = {1982},
  doi        = {10.1007/978-1-4612-5700-4},
  url        = {https://doi.org/10.1007/978-1-4612-5700-4},
  address    = {New York, NY},
  publisher  = {Springer US}
}

@article{ambainis2007,
  title      = {Quantum Walk Algorithm for Element Distinctness},
  author     = {Andris Ambainis},
  year       = {2007},
  doi        = {10.1137/S0097539705447311},
  url        = {https://doi.org/10.1137/s0097539705447311},
  issue      = {1},
  journal    = {SIAM Journal on Computing},
  pages      = {210-239},
  publisher  = {Society for Industrial \& Applied Mathematics (SIAM)},
  volume     = {37}
}

@inproceedings{ambainis2001,
  title      = {One-dimensional quantum walks},
  author     = {Andris Ambainis and Eric Bach and Ashwin Nayak and Ashvin Vishwanath and John Watrous},
  year       = {2001},
  doi        = {10.1145/380752.380757},
  url        = {https://doi.org/10.1145/380752.380757},
  address    = {New York, NY, USA},
  booktitle  = {Proceedings of the thirty-third annual ACM symposium on Theory of computing},
  pages      = {37-49},
  publisher  = {ACM}
}

@inproceedings{grover1996,
  title      = {A fast quantum mechanical algorithm for database search},
  author     = {Lov K. Grover},
  year       = {1996},
  doi        = {10.1145/237814.237866},
  url        = {https://doi.org/10.1145/237814.237866},
  address    = {New York, New York, USA},
  booktitle  = {Proceedings of the twenty-eighth annual ACM symposium on Theory of computing  - STOC '96},
  pages      = {212-219},
  publisher  = {ACM Press}
}

@article{kubota2022,
  title      = {Perfect state transfer in Grover walks between states associated to vertices of a graph},
  author     = {Sho Kubota and Etsuo Segawa},
  year       = {2022},
  doi        = {10.1016/j.laa.2022.04.006},
  url        = {https://doi.org/10.1016/j.laa.2022.04.006},
  journal    = {Linear Algebra and its Applications},
  pages      = {238-251},
  publisher  = {Elsevier BV},
  volume     = {646}
}

@article{godsil2012,
  title      = {When can perfect state transfer occur?},
  author     = {Chris Godsil},
  year       = {2012},
  doi        = {10.13001/1081-3810.1563},
  url        = {https://doi.org/10.13001/1081-3810.1563},
  journal    = {The Electronic Journal of Linear Algebra},
  publisher  = {University of Wyoming Libraries},
  volume     = {23}
}

@article{godsil2012a,
  title      = {State transfer on graphs},
  author     = {Chris Godsil},
  year       = {2012},
  doi        = {10.1016/j.disc.2011.06.032},
  url        = {https://doi.org/10.1016/j.disc.2011.06.032},
  issue      = {1},
  journal    = {Discrete Mathematics},
  pages      = {129-147},
  publisher  = {Elsevier BV},
  volume     = {312}
}

@article{higuchi2017,
  title      = {Periodicity of the Discrete-time Quantum Walk on a Finite Graph},
  author     = {Yusuke Higuchi and Norio Konno and Iwao Sato and Etsuo Segawa},
  year       = {2017},
  doi        = {10.4036/iis.2017.A.10},
  url        = {https://doi.org/10.4036/iis.2017.a.10},
  issue      = {1},
  journal    = {Interdisciplinary Information Sciences},
  pages      = {75-86},
  publisher  = {Graduate School of Information Sciences, Tohoku University},
  volume     = {23}
}

@article{inui2005,
  title      = {Localization of multi-state quantum walk in one dimension},
  author     = {Norio Inui and Norio Konno},
  year       = {2005},
  doi        = {10.1016/j.physa.2004.12.060},
  url        = {https://doi.org/10.1016/j.physa.2004.12.060},
  journal    = {Physica A: Statistical Mechanics and its Applications},
  pages      = {133-144},
  publisher  = {Elsevier BV},
  volume     = {353}
}

@article{hosaka2026,
  title      = {Pulsation of quantum walk between two arbitrary graphs with weakly connected bridge},
  author     = {Taisuke Hosaka and Etsuo Segawa},
  year       = {2026},
  doi        = {10.1088/1402-4896/ae67b6},
  url        = {https://doi.org/10.1088/1402-4896/ae67b6},
  issue      = {19},
  journal    = {Physica Scripta},
  pages      = {195205},
  publisher  = {IOP Publishing},
  volume     = {101}
}

@article{hosaka2025,
  title      = {Pulsation of quantum walk on {J}ohnson graph},
  author     = {Taisuke Hosaka and Etsuo Segawa},
  year       = {2025},
  doi        = {10.1007/s11128-025-04996-z},
  url        = {https://doi.org/10.1007/s11128-025-04996-z},
  issue      = {11},
  journal    = {Quantum Information Processing},
  publisher  = {Springer Science and Business Media LLC},
  volume     = {24}
}

\end{document}